\theoremstyle{plain}
\newtheorem{theorem}{Theorem}
\theoremstyle{definition}
\definecolor{darkgreen}{rgb}{0,.3,0}
\definecolor{darkblue}{rgb}{0,0,.5}
\definecolor{darkred}{rgb}{.4,0,0}
\crefname{figure}{figure}{figures}
\tikzstyle{edge} = [black,line width=.4mm]
\tikzstyle{momentumarrow}=[blue,line width=.2mm,> = Stealth]
\tikzstyle{vertex}=[circle,minimum size=2.2mm, draw=black, fill=black, inner sep=0mm]
\newcommand{\renop} { \mathcal {R}  }
\newcommand{\ren} {{\tiny \mathcal R}}
\newcommand{\loopnumber}{\ell}
\newcommand{\Graph}{G}
\renewcommand{\d}{\textnormal{d}}
\newcommand*{\mb}{\mathbbm}
\newcommand*{\mc}{\mathcal}
\newcommand\scalemath[2]{\scalebox{#1}{\mbox{\ensuremath{\displaystyle #2}}}}
\newcommand{\period}{\mathcal{P}}
\newcommand{\abs}[1]{\left | #1 \right |}
\begin{document}
	
	\title{Ladders and rainbows in Minimal Subtraction}
	
	\date{\today}
	
	\author{Paul-Hermann Balduf}
	
	\email{paul-hermann.balduf@maths.ox.ac.uk}
	\homepage{https://paulbalduf.com}
	\affiliation{Mathematical Institute, University of Oxford, OX2 6GG, UK.}
	
	\begin{abstract}
		In dimensional regularization with $D=D_0-2\epsilon$, the minimal subtraction (MS) scheme is characterized by counterterms that only consist of singular terms in $\epsilon$. We develop a general method to compute the infinite sums of massless ladder or rainbow Feynman integrals in MS at $D_0$. Our method is based on relating the MS-solution to a kinematic solution at a coupling-dependent renormalization point. If the $\epsilon$-dependent Mellin transform of the kernel diagram of the insertions can be computed in closed form, we typically obtain a closed expression for the all-order  solution in MS. As examples, we consider Yukawa theory and $\phi^4$ theory in $D_0=4$, and $\phi^3$ theory in $D_0=6$.
	\end{abstract}

	\maketitle

\section{Introduction}

Many quantum field theories contain infinite families of Feynman diagrams which arise from repeatedly inserting subdiagrams into the same kernel diagram. Two particular such cases are \emph{ladders} and  \emph{rainbows}. Almost 30 years ago, the exact sum of ladders and rainbows for $\phi^3$ and Yukawa theory has been computed   in kinematic (MOM) renormalization conditions  \cite{delbourgo_dimensional_1996,delbourgo_dimensional_1997}. 

From the perspective of Hopf algebra theory of renormalization \cite{kreimer_overlapping_1999,connes_renormalization_2000,connes_renormalization_2001}, the renormalized amplitudes of such sums of diagrams are the solutions to \emph{linear} single-scale Dyson-Schwinger equations. By now, there is a systematic procedure to construct  the solution in the MOM scheme from the Mellin transform of the kernel diagram, developed by Broadhurst, Kreimer, Yeats, and collaborators  \cite{broadhurst_renormalization_1999,broadhurst_exact_2001,kreimer_etude_2006,kreimer_etude_2008,kreimer_recursion_2008,yeats_growth_2008}. Conversely, solutions in the minimal subtraction (MS) scheme have so far only  been computed numerically \cite{broadhurst_renormalization_1999,balduf_dyson_2023}, where in the latter publication, some closed formulas have been discovered empirically by matching their series expansion. 

In the present work, we present a general method to compute the closed-form solution of linear single-scale single-kernel Dyson-Schwinger equations in the MS scheme.  We confirm   the formulas found in \cite{balduf_dyson_2023}, and compute the exact solution for some further examples.

\subsection{Linear Dyson-Schwinger equations}

Ladders and rainbows are infinite families of Feynman diagrams which are characterized by recursively inserting an already existing nested diagram into some fixed \emph{kernel} diagram at every new order in perturbation theory. 
In the present article, we restrict ourselves to the case where the kernel diagram is ultraviolet divergent, free of ultraviolet subdivergences, and free of infrared divergences. This ensures that the class of diagrams obtained this way is closed under perturbative renormalization, that is, they form a sub Hopf algebra in the Hopf algebra of renormalization \cite{kreimer_anatomy_2006,foissy_faa_2008,foissy_classification_2010}. The decisive feature of ladders/rainbows is that only one copy of the existing nested diagram is inserted at each iteration, and it is always inserted into the same position in the kernel diagram. This  implies that the procedure can be described by a \emph{linear} Dyson-Schwinger equation (DSE), schematically of the form 
\begin{align}\label{DSE_combinatorial}
G_\ren &= 1+\alpha \left( 1-\renop \right) B_+ \left[ G_\ren \right] .
\end{align}
Here, $G_\ren$ is a renormalized 1PI Green's function, an infinite formal series of Feynman diagrams. $G_\ren$ is a scalar. If the theory in question has non-trivial tensor structures, $G_\ren$ is understood to be a projection onto a suitable basis tensor, also called form factor.  $\alpha$ is the coupling, the operator $B_+$ denotes insertion into the kernel diagram, and $\renop$ is a renormalization operator.  
We shall clarify the precise meaning of \cref{DSE_combinatorial} in the following, but first, to have a concrete example at hand, we consider the case of rainbows for the 1PI propagator of $\phi^3$ theory.

\begin{figure}[htbp]
	\centering
	\begin{tikzpicture}
		\node at (-1.5,.1){$B_+[G] =$};
		\node[vertex] (v1) at (0,0){};
		\node[vertex](v2) at (3,0){};
		\node(vx) at (1.5,0){$G$};
		\draw[edge,  bend angle=50, bend left] (v1) to (v2);
		\draw[edge] (v1) -- (vx) --(v2);
		\draw[edge] (v1)-- +(-.5,0);
		\draw[edge ] (v2)-- +(.5,0);
	\end{tikzpicture}
	
	\caption{1-loop kernel for the propagator of $\phi^3$ theory. The operator $B_+[G]$ means to compute the Feynman integral of this diagram, where a subdiagram $G$ has been inserted into the lower  edge. Notice that a propagator-type (amputated) subdiagram cancels one of its adjacent edges since it is proportional to an inverse propagator.}
	\label{fig:B_phi3}
\end{figure}
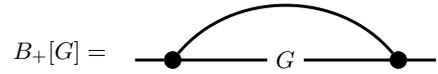

In this case, the kernel diagram is the 1-loop multiedge  shown in \cref{fig:B_phi3}, and the operator $B_+[G]$ denotes insertion of  $G$ (which may be a single diagram or a sum of diagrams, in which case $B_+$ acts linearly) into the lower one of the two internal edges.  In particular, $B_+[\mb 1]$ is the kernel itself without any insertions, 
\begin{align*}
B_+[\mb 1](p) &= \int \frac{\d^D k}{(2\pi)^D} \frac{1}{k^2} \frac{1}{(k-p)^2},\\
 B_+[G](p) &= \int \frac{\d^D k}{(2\pi)^D} \frac{1}{k^2} G(k) \frac{1}{(k-p)^2}.
\end{align*}
Here, we have left the spacetime dimension $D=D_0-2\epsilon$ arbitrary in order to use dimensional regularization. 
For the renormalization operator $\renop$, we may then choose   \emph{minimal subtraction} (MS) renormalization conditions, which means that  $\renop$ projects onto the pole terms in $\epsilon$, such that $(1-\renop)$ subtracts pole terms. Alternatively, we can choose \emph{kinematic renormalization} (MOM), still with $D=D_0-2\epsilon$, then $\renop$  projects onto a fixed value of external kinematic parameters, such that $(1-\renop)$ vanishes at that value. 
Repeatedly inserting the already existing sum $G_\ren$ into the kernel diagram, as described by the DSE in \cref{DSE_combinatorial},  gives rise to the sum of rainbow diagrams shown in \cref{fig:phi3_rainbows}.

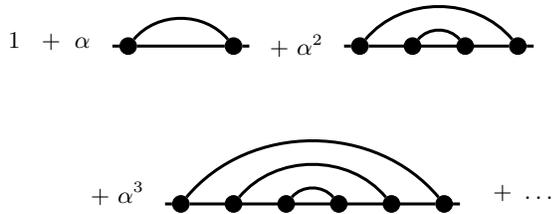
\begin{figure}[htbp]
	\centering
	\begin{tikzpicture}[scale=.7]
		\node at (-1.5,.1){$1 ~~+~ \alpha$};
		\node[vertex] (v1) at (0,0){};
		\node[vertex](v2) at (2,0){};
		
		\draw[edge,  bend angle=50, bend left] (v1) to (v2);
		\draw[edge] (v1) --  (v2);
		\draw[edge] (v1)-- +(-.3,0);
		\draw[edge ] (v2)-- +(.3,0);
		
		\node at (3.2,0){$+~\alpha^2$};
		
		\node[vertex] (v1) at (4.4,0){};
		\node[vertex](v2) at (7.4,0){};
		\node[vertex](v3) at (5.4,0){};
		\node[vertex](v4) at (6.4,0){};
		
		\draw[edge,  bend angle=50, bend left] (v1) to (v2);
		\draw[edge,  bend angle=50, bend left] (v3) to (v4);
		\draw[edge] (v1)  --(v2);
		\draw[edge] (v1)-- +(-.3,0);
		\draw[edge ] (v2)-- +(.3,0);
		
		\node at (-.2,-2.8){$+~\alpha^3$};
		
		\node[vertex] (v1) at (1,-3){};
		\node[vertex](v2) at (6,-3){};
		\node[vertex](v3) at (2,-3){};
		\node[vertex](v4) at (3,-3){};
		\node[vertex](v5) at (4,-3){};
		\node[vertex](v6) at (5,-3){};
		
		\draw[edge,  bend angle=50, bend left] (v1) to (v2);
		\draw[edge,  bend angle=50, bend left] (v3) to (v6);
		\draw[edge,  bend angle=50, bend left] (v4) to (v5);
		\draw[edge] (v1)  --(v2);
		\draw[edge] (v1)-- +(-.3,0);
		\draw[edge ] (v2)-- +(.3,0);
		
		\node at (7.5,-2.8){$+~\ldots$};
	\end{tikzpicture}
	
	\caption{Sum of rainbows in $\phi^3$ theory. }
	\label{fig:phi3_rainbows}
\end{figure}

Obviously, the sum of rainbows only represents a small subset of the diagrams in full $\phi^3$ theory. Conceptually, a linear DSE such as \cref{DSE_combinatorial} is a simplification in multiple ways; a full quantum field theory typically has multiple coupled DSEs (in particular, in a non-scalar theory, a single Green's function can have multiple basis tensor structures), each of them has multiple kernel diagrams, and insertions can happen in more than one place in each kernel. We comment on these generalizations after describing how to solve the single-scale single-kernel equation in MOM.

\subsection{Solution in kinematic renormalization}\label{sec:MOM}

All-order solutions to ladder and rainbow DSEs in MOM were first derived by Delbourgo and collaborators in position space and for arbitary dimension $D$ in \cite{delbourgo_dimensional_1997,delbourgo_dimensional_1996}. 
However, for our solution in minimal subtraction, a different method,  developed by Broadhurst, Kreimer, Yeats and collaborators \cite{broadhurst_renormalization_1999,broadhurst_combinatoric_2000,broadhurst_exact_2001,kreimer_etude_2006,kreimer_etude_2008,kreimer_recursion_2008,yeats_growth_2008}, is more suitable.   This methods has the additional benefits that it has straightforward relations to  conventional momentum-space Feynman integral calculations,   it allows for various generalizations beyond the linear case, and it is fundamentally related to the Hopf algebra theory of renormalization.  We briefly review this method in the following as far as we need it, comprehensive accounts are, for example, the books/theses  \cite{balduf_dyson_2024,yeats_combinatorial_2017,panzer_hopf_2012,olson-harris_applications_2024}.

We consider massless \emph{single-scale} Dyson-Schwinger equations, this means that there is only one kinematic variable. For a propagator-type Green's function, this is the external momentum $p$, for other Green's functions, all but one scale need to be fixed. Under this condition, the solution of every Feynman integral is itself proportional to the kinematic variable. A Feynman diagram  in spacetime dimension $D$ with $\loopnumber$ loops and propagator powers $a_e$ has  superficial degree of convergence
\begin{align}\label{def:sdd}
\omega_\Graph &:= \sum_{e\in E_\Graph} a_e - \loopnumber \frac D 2,
\end{align}
and its integral is proportional to $(p^2)^{-\omega_\Graph}$. This means that the operation $B_+[\Graph]$ of inserting a subdiagram $\Graph$ into an edge $e$, where $k_e$ is the edge momentum,  is the same as replacing the exponent of the propagator of $e$ by an appropriate non-integer value, 
\begin{align*}
\frac{1}{(k_e^2)} \mapsto \frac{1}{(k_e^2)^{1+\omega_\Graph}}.
\end{align*}
Consequently, we can compute \emph{all} of the Feynman diagrams if we can compute the kernel diagram for arbitrary powers of the propagator where the insertion happens. Let $k_1, \ldots, k_{\abs{E_\Graph}}$ be the edge momenta  of a kernel diagram $\Graph$, and assume we insert into $k_1$. The \emph{Mellin transform} of  $\Graph$ is defined by 
\begin{align}\label{def:Mellin_transform}
&F_\Graph(\epsilon, \rho) \\
&:= (p^2)^{\omega_\Graph} \int \frac{\d^{\loopnumber (D_0-2\epsilon)}}{(2\pi)^{\loopnumber (D_0-2\epsilon)}} \frac{1}{(k_1^2)^{1-\rho}} \frac{1}{(k_2^2)} \cdots \frac{1}{(k_{\abs{E_\Graph}})^2}.\nonumber 
\end{align}
This function depends on the integer spacetime dimension $D_0$, but we do not write this dependence since within one theory, $D_0$ is fixed. 
The prefactor in \cref{def:Mellin_transform} ensures that $F_\Graph$ is independent of the kinematic variable $p$. Since we assume that the kernel diagrams are superficially UV-divergent,   the Mellin transform has a simple pole of the form 
\begin{align}\label{Mellin_transform_pole}
F(\epsilon, \rho) = \frac{\period}{\loopnumber \epsilon -\rho}+ \text{regular terms},
\end{align}
where $\loopnumber$ is the loop number and $\period$ is the \emph{period} \cite{bloch_motives_2006,schnetz_quantum_2010,balduf_statistics_2023}.   

In our formalism, a Green's function is a formal power series in the coupling $\alpha$, a logarithmic kinematic variable $L= \ln ({p^2}/{\mu^2})$, and the dimensional regulator $\epsilon$. 
At each finite order in perturbation theory, we are therefore working with polynomials. 
The operator $B_+[\Graph]$, which in \cref{DSE_combinatorial} acts on Feynman diagrams by insertion, algebraically is a Hochschild 1-cocycle in the Hopf algebra of renormalization. Translated to polynomials, it gets replaced by a differential operator acting on the Mellin transform $F(0,\rho)$ of the kernel graph by  a polynomial $f(\alpha,L)$  \cite{panzer_renormalization_2015}:
\begin{align*}
B_+[f](L) &\mapsto  f(\alpha,\partial_\rho) e^{L \rho} F(0, \rho) \Big|_{\rho=0}. 
\end{align*}
The notation $f(\alpha, \partial_\rho)$ means that the parameter $L$ in $f(\alpha,L)$ is to be replaced by a differential operator which acts on all terms to the right. 
We see why   MOM renormalization conditions are preferred: In MOM, an expression is renormalized by subtracting the same expression at the kinematic renormalization point, which one can choose to be $L=0$ (i.e. $p^2=\mu^2$). Since the $L$-dependence of the cocycle is very simple, MOM conditions can be realized simply by replacing $e^{L \rho} \mapsto (e^{L \rho}-1)$.

To better understand the effects of minimal subtraction later on, it will prove beneficial to generalize \cref{DSE_combinatorial} to a \emph{non-linear} DSE, which allows an arbitrary, but fixed, insertion exponent $s\in \mb R$, 
\begin{align}\label{DSE_combinatorial_s}
	G_\ren &= 1+\alpha \left( 1-\renop \right) B_+ \left[ G_\ren^{1+s} \right] .
\end{align}
The linear DSE is recovered with $s=0$. 
The analytic version of this DSE, including the $\epsilon$-dependence, reads
\begin{align}\label{DSE_differential}
 &G_\ren(\alpha,\epsilon, L) \\
 &=1+ \alpha G_\ren^{1+s} (\alpha,\epsilon ,\partial_\rho) \left( e^{L(\rho-\loopnumber \epsilon)}-1 \right) F( \epsilon, \rho) \Big|_{\rho=0}. \nonumber 
\end{align}
Since the power series $G_\ren(\alpha,\epsilon,L)$ starts with 1, it can be computed recursively order by order from \cref{DSE_differential}.   The parenthesis $e^{L(\rho-\loopnumber \epsilon)}-1=L(\rho-\loopnumber \epsilon) + \mc O(\rho^2, \epsilon^2)$ cancels the simple pole of $F(\epsilon,\rho)$ from \cref{Mellin_transform_pole}, and the right hand side is regular at $\epsilon=0$, as it should be for a renormalized Green's function. To compute the solution in MOM  at the physical dimension $\epsilon=0$, it is sufficient to work with $F(0, \rho)$.  

The Green's function in MOM is unity at $L=0$ (recall that $G_\ren$ denotes the projection onto suitable tensors, such that the tree level term is indeed 1), it therefore has a power series expansion of the form 
\begin{align}\label{G_expansion}
	G_\ren(\alpha,\epsilon,L) &= 1+\sum_{j=1}^\infty \gamma_j(\alpha,\epsilon) L^j.
\end{align}
Here, $\gamma_1(\alpha,\epsilon)=:\gamma(\alpha,\epsilon)$ is the anomalous dimension. In our setting, where there are no other Green's functions, the beta function of the theory is $\beta(\alpha, \epsilon)= s \alpha \gamma(\alpha,\epsilon)- \alpha \epsilon$, where $s$ is the exponent from \cref{DSE_combinatorial_s}. The extra factor $\alpha \epsilon$ represents the explicit scale dependence of the coupling constant in a non-integer spacetime dimension, to be discussed in \cref{sec:scheme-dependent}. The Callan-Symanzik  equation  \cite{callan_broken_1970,symanzik_small_1970},
\begin{align}\label{Callan_Symanzik_equation_1}
\partial_L G_\ren(\alpha,\epsilon,L) &= \Big( \gamma(\alpha, \epsilon) + \beta(\alpha,\epsilon) \partial_\alpha \Big) G_\ren(\alpha, \epsilon, L),
\end{align}
implies that all higher functions $\gamma_j(\alpha, \epsilon)$ in \cref{G_expansion} are determined from $\gamma_1$ according to 
\begin{align}\label{gamma_recurrence}
  \gamma_j(\alpha,\epsilon) &= \frac 1 j\Big( \gamma(\alpha,\epsilon)  + \big( s \alpha \gamma(\alpha,\epsilon) - \epsilon\alpha  \big) \partial_\alpha \Big) \gamma_{j-1}(\alpha, \epsilon).
\end{align}
Hence, solving the DSE is equivalent to determining $\gamma_1(\alpha, \epsilon)$, which at the same time is the anomalous dimension and the seed for the recurrence for $\gamma_j$. Inserting \cref{G_expansion} into the DSE \cref{DSE_differential}, and expanding with respect to $L$, produces a pseudo-differential equation for $\gamma(\alpha, \epsilon)$ \cite{balduf_dyson_2024}: 
\begin{align}\label{MOM_differential_general}
\frac{1}{  F(\epsilon, \rho+\loopnumber \epsilon)} \Big|_{\rho \mapsto \gamma(\alpha, \epsilon)+ \normalsize(s \alpha \gamma(\alpha, \epsilon) - \epsilon \alpha \normalsize)\partial_\alpha }   &= \alpha .
\end{align}
The differential operator on the LHS appears to \enquote{act on nothing}, in fact it acts on $\gamma(\alpha, \epsilon)$ itself. One may equivalently include another factor $\frac 1 {\rho}$, this  makes the argument  explicit:
\begin{align*}
	\frac{1}{  \rho \cdot  F(\epsilon, \rho+\loopnumber \epsilon)} \Big|_{\rho \mapsto \gamma(\alpha, \epsilon)+  (s \alpha \gamma(\alpha, \epsilon) - \epsilon \alpha  )\partial_\alpha } \gamma(\alpha, \epsilon)  &= \alpha .
\end{align*}
  For later use, we introduce the series expansion 
\begin{align}\label{mellin_T_series}
\frac{1}{ F(\epsilon, \rho+ \loopnumber\epsilon)} &=: T_0(\rho) \cdot  \Big( 1+  \sum_{j=1}^\infty  \epsilon^j T_j(\rho) \Big) , 
\end{align}
so that the limit $\epsilon \rightarrow 0$ involves only $T_0(\rho) = \frac{1}{ F(0 , \rho)}$ and produces the pseudo-differential equation 
\begin{align}\label{ODE}
T_0(\gamma + s \alpha \gamma \partial_\alpha) &=\alpha.
\end{align}
The discovery of this differential equation for the   cases of Yukawa and $\phi^3$ theory by Broadhurst and Kreimer in \cite{broadhurst_exact_2001}, and its exact solution, was the starting point for the development of the present formalism. It draws its power from the fact that in MOM, one works with power series in only one parameter, $\alpha$. Versions of \cref{DSE_differential,ODE} have allowed for symbolic and numerical calculations to very high loop order \cite{bellon_efficient_2010,bellon_renormalization_2008,bellon_approximate_2010} and the study of their asymptotic and resurgent features \cite{bellon_alien_2017,bellon_schwinger_2015,borinsky_nonperturbative_2020,borinsky_semiclassical_2021,bellon_resurgent_2021,bellon_ward_2021,borinsky_taming_2022,borinsky_treetubings_2024}, where the series coefficients have a combinatorial interpretation in terms of chord diagrams \cite{marie_chord_2013,courtiel_nexttok_2020,hihn_generalized_2019,courtiel_terminal_2017} and, more recently, tubings of rooted trees \cite{balduf_tubings_2024,borinsky_treetubings_2024,olson-harris_algebraic_2025}. These references also contain generalizations to multiple kernels, to multiple insertion places, and to systems of coupled  DSEs.

\bigskip 

Notice that the entire procedure never requires us to explicitly work with counterterms. We are free to include the $\epsilon$-dependence of renormalized quantities, but we can also work at $\epsilon=0$ throughout. Nevertheless, the formalism is entirely consistent with ordinary multiplicative renormalization, namely with 
\begin{align}\label{Gren_Z}
G_\ren(\alpha,\epsilon, L) &= Z_2 \cdot G_0 \big( Z_\alpha \mu^{2\epsilon}\cdot \alpha, ~\epsilon, ~L \big) .
\end{align}
The first argument of the bare Green's function $G_0$, $Z_\alpha \mu^{2\epsilon}\alpha=:\alpha_0$, is the bare coupling. It has mass dimension $2\epsilon$, whereas the renormalized coupling $\alpha$ is dimensionless.
The counterterms are related to the renormalization group functions by the Gross t'Hooft relations \cite{thooft_dimensional_1973,gross_applications_1981}, and the equation $\beta(\alpha, \epsilon) = s\alpha \gamma(\alpha, \epsilon)-\alpha \epsilon $ is equivalent to $Z_\alpha = Z_2^s$, concretely
\begin{align}\label{Za_beta}
\beta(\alpha,\epsilon) &= \frac{-\epsilon \alpha}{1+ \alpha \partial_\alpha \ln (Z_\alpha(\alpha,\epsilon))}, \\
\gamma(\alpha, \epsilon)  &= - \beta(\alpha, \epsilon) \partial_\alpha \big(\ln Z_2(\alpha, \epsilon) \big), \nonumber \\
Z_2 &= \exp \left( -\int_0^\alpha \frac{\d u}{u} \frac{\gamma(u,\epsilon)}{s\gamma(u,\epsilon)-\epsilon} \right) , \nonumber  \\
 Z_\alpha &=  \exp \left( -\int_0^\alpha \frac{\d u}{u} \frac{s\gamma(u,\epsilon)}{s\gamma(u,\epsilon)-\epsilon} \right)= Z_2^s .\nonumber 
\end{align}

\subsection{The linear DSE in MOM}\label{sec:linear_MOM}

If the DSE is linear, that is, $s=0$, \cref{MOM_differential_general} becomes
\begin{align}\label{linear_ODE_epsilon}
	\frac{1}{\rho F(\epsilon, \rho+\loopnumber\epsilon)} \Big|_{\rho \mapsto \gamma- \epsilon \alpha  \partial_\alpha } \gamma(\alpha,\epsilon) &= \alpha .
\end{align}
At this point it becomes clear why we generalized to the non-linear DSE \cref{DSE_combinatorial_s}: As long as $\epsilon \neq 0$, even the linear DSE still gives rise to a differential equation, similar to the non-linear DSE at $\epsilon=0$ (\cref{ODE}). However, in \cref{linear_ODE_epsilon} the   $\epsilon$- and $\alpha$-dependence of $\gamma(\alpha,\epsilon)$ is essentially decoupled: We can first set $\epsilon=0$ to obtain the algebraic equation 
\begin{align}\label{MOM_algebraic_equation}
 T_0(\gamma(\alpha))&= \alpha. 
\end{align}
Its solution $\gamma(\alpha):= \gamma(\alpha,0)$ is the exact anomalous dimension of the linear DSE at the physical dimension $D_0$.   We can then construct a power-series expansion in $\epsilon$ order by order from \cref{linear_ODE_epsilon}.  
Knowing $\gamma(\alpha,\epsilon)$, we get the full Green's function from the recurrence \cref{gamma_recurrence}. Equivalently, we solve the Callan-Symanzik equation 
\begin{align}\label{linear_Callan_Symanzik}
\partial_L G_\ren(\alpha,\epsilon, L) &= \left( \gamma(\alpha,\epsilon) - \epsilon \alpha \partial_\alpha \right) G_\ren(\alpha,\epsilon,L)
\end{align}
by separation of variables, imposing MOM conditions  $G_\ren(\alpha,\epsilon,0)=1$. This leads to  
\begin{align}\label{MOM_integral_solution}
G_\ren(\alpha,\epsilon, L) &= \exp \left( \int_{\alpha e^{-\epsilon L}}^\alpha \d u \; \frac{\gamma(u, \epsilon)}{u \epsilon} \right) . 
\end{align}
Both approaches produce, in the physical limit $\epsilon \rightarrow 0$,  the scaling solution
\begin{align}\label{MOM_scaling_solution}
G_\ren(\alpha,0, L) &= \exp \big( L \cdot  \gamma(\alpha)    \big)  = \left( \frac{p^2}{\mu^2} \right) ^{\gamma(\alpha)}.
\end{align}

\section{Minimal subtraction}

\subsection{Scheme-dependent renormalization group} \label{sec:scheme-dependent}

We now consider an arbitrary non-kinematic renormalization scheme, which will later be specialized to MS.  Our strategy is to relate such a scheme to kinematic renormalization, in order to be able to use the machinery of \cref{sec:MOM}. Concretely, we are constructing a power series $\delta(\alpha, \epsilon)$ such that the MS Green's function equals a MOM Green's function, but with renormalization point $L=-\delta$ instead of $L=0$. A crucial insight from \cref{sec:MOM,sec:linear_MOM} is that for $\epsilon \neq 0$, the theory has a non-vanishing beta function, even in the linear case, arising from the fact that we have tacitly absorbed a scale $\mu^{2\epsilon}$ into $\alpha=\alpha_0 \mu^{-2\epsilon}Z_\alpha^{-1}$ (\cref{Gren_Z}). An analogous phenomenon occurs when we express a MS solution through a shifted MOM solution: In MS, we are working with an expansion parameter $\bar \alpha$ which is related to the MOM expansion parameter $\alpha$ by $\bar \alpha (\alpha)= \alpha e^{-\epsilon \delta(\alpha, \epsilon)}$. Another way to see this is that the logarithmic scale $\delta(\alpha,\epsilon)= \ln \frac{\Delta^2}{\mu^2}$ amounts to  some non-logarithmic momentum scale $\Delta(\alpha, \epsilon)$, and the MS-coupling $\bar \alpha$ is expressed relative to \emph{that} scale, 
\begin{align*}
\bar \alpha &= \alpha e^{-\epsilon \delta} = \alpha_0 Z_\alpha^{-1}\mu^{-2\epsilon} \left( \frac{\Delta^2}{\mu^2} \right) ^{-\epsilon} = \alpha_0 Z_\alpha^{-1}\Delta^{-2 \epsilon}. 
\end{align*}
The distinction $\alpha \neq \bar \alpha$ only appears since we explicitly relate MS to MOM. If one works in MS throughout, one would always be using $\bar \alpha$ (and simply call it $\alpha$).

We thus demand that the MS-renormalized $\bar G_{ \ren}$ should be related to the MOM-renormalized $G_\ren$ by
\begin{align}\label{G_shifted}
\bar G_{ \ren}(\bar \alpha(\alpha), \epsilon, L) &= G_\ren\big(\alpha, \epsilon , L+\delta(\alpha, \epsilon)\big). 
\end{align} 
This relation constitutes the definition of $\delta(\alpha, \epsilon)$. The fact that it is possible to find such $\delta(\alpha, \epsilon)$ is obvious in perturbation theory and discussed at length in \cite{balduf_dyson_2023,balduf_dyson_2024}: At every new order in $\alpha$, the renormalized Green's function is a polynomial in $\alpha$ and $L$, and one can introduce an arbitrary finite shift of its value by adding an offset, of the same order in $\alpha$, to $L$. In fact, this mechanism is completely analogous to the mechanism that allows the order-by-order construction of counterterms, just that it involves only finite shifts. Conversely, a choice of $\delta(\alpha, \epsilon)$ amounts to a choice of perturbative renormalization scheme, and the kinematic schemes are exactly those where $\delta$ is independent of $\alpha$ and $\epsilon$. 

For the MS solution, too, we can write a generic expansion of the form \cref{G_expansion}, but this time there is an additional non-trivial function $\bar \gamma_0(\bar \alpha, \epsilon)$:  
\begin{align}\label{G_expansion_MS}
	\bar G_{ \ren}(\bar \alpha,\epsilon,L) &= \sum_{j=0}^\infty \bar \gamma_j(\bar \alpha,\epsilon) L^j.
\end{align}
With our definitions, the Callan-Symanzik equation (\ref{Callan_Symanzik_equation_1}) takes the same form in all schemes, 
\begin{align}\label{Callan_symanzik_equation}
	\partial_L \bar G_{  \ren}(\bar \alpha,\epsilon,L) &= \Big( \bar \gamma(\bar \alpha, \epsilon) + \bar \beta(\bar \alpha,\epsilon) \partial_{\bar \alpha} \Big) \bar G_{ \ren}(\bar \alpha, \epsilon, L).
\end{align}
Since we are still working with a single DSE (\cref{DSE_combinatorial_s}), the renormalization group functions are still related by
\begin{align}\label{MS_beta_gamma}
\bar \beta(\alpha, \epsilon) &= s\bar \alpha \bar \gamma(\bar \alpha, \epsilon) - \bar \alpha \epsilon. 
\end{align}
Recall that in MOM, the first expansion function $\gamma_1(\alpha, \epsilon)$ coincides with the anomalous dimension $\gamma(\alpha, \epsilon)$. This is not true in general renormalization schemes. Instead, from inserting \cref{G_expansion_MS} into \cref{Callan_symanzik_equation}, we have for all schemes
\begin{align}\label{anomalous_dimension}
	\bar \gamma(\bar \alpha,\epsilon) &=  \frac{\bar \gamma_1(\bar \alpha,\epsilon) - \epsilon \bar \alpha \partial_{\bar \alpha} \bar \gamma_0(\bar \alpha,\epsilon) }{\left( 1+ s \bar\alpha \partial_{\bar \alpha} \right) \bar \gamma_0(\bar \alpha,\epsilon)}.
\end{align}

 We insert both series expansions (\cref{G_expansion,G_expansion_MS}) into \cref{G_shifted} to obtain
\begin{align}\label{gamma_delta_rescaling}
\bar \gamma_k(\bar \alpha(\alpha), \epsilon) &= \bar \gamma_k (\alpha e^{-\epsilon \delta(\alpha,\epsilon)}, \epsilon)  \\
&= \sum_{j=k}^\infty \binom j k \gamma_j(\alpha,\epsilon) \delta^{j-k} (\alpha,\epsilon).\nonumber
\end{align}
This reveals that of the three functions $\left \lbrace \bar \gamma(\bar \alpha, \epsilon), \gamma(\alpha, \epsilon), \delta(\alpha, \epsilon) \right \rbrace $, only two are independent. 
In \cref{gamma_delta_rescaling}, it is natural to use  the variable $\alpha$ of the MOM solution since we have defined $\delta=\delta(\alpha, \epsilon)$ as a function of $\alpha$. However, we can revert these series:
\begin{align}\label{delta_bardelta}
	\bar \delta(\bar \alpha, \epsilon)&:= \delta(\alpha(\bar \alpha, \epsilon), \epsilon), \quad \bar \alpha = \alpha e^{-\epsilon \delta(\alpha, \epsilon)} \\
	\Leftrightarrow \quad \alpha &= \bar \alpha e^{+\epsilon \delta(\alpha(\bar \alpha, \epsilon), \epsilon)} = \bar \alpha e^{\epsilon \bar \delta(\bar \alpha, \epsilon)}.\nonumber
\end{align}
From now on we leave out the arguments; $\bar \delta$ is a function of $\bar \alpha$ and $\delta$ is a function of $\alpha$.  The chain rule implies 
\begin{align}\label{delta_deltabar_chain_rule}
	\frac{\partial \bar \alpha}{\partial \alpha}&= \frac{\bar \alpha}{\alpha}-\epsilon \bar \alpha \partial_\alpha \delta  \\
	\Rightarrow \quad \alpha \partial_\alpha \delta&= \frac{\bar \alpha \partial_{\bar \alpha} \bar \delta}{1+ \epsilon \bar \alpha \partial_{\bar \alpha} \bar \delta}, \qquad \bar \alpha \partial_{\bar \alpha} \bar \delta = \frac{\alpha \partial_\alpha \delta}{1-\epsilon \alpha \partial_\alpha \delta}.\nonumber
\end{align}

\begin{theorem}\label{lem:MS_anomalous_dimension}
	With the shift $\delta$ resp. $\bar \delta$ from   \cref{delta_bardelta}, the shifted anomalous dimension $\bar \gamma(\bar \alpha, \epsilon)$ is related to the MOM anomalous dimension $\gamma(\alpha, \epsilon)$ via
	\begin{align*}
		\bar \gamma\big(\bar \alpha(\alpha), \epsilon \big) &= \frac{  \gamma(\alpha,\epsilon)}{ 1+ \left(   s \alpha \gamma(\alpha,\epsilon) - \epsilon\alpha   \right)  \partial_\alpha \delta }, \\
	 \textnormal{equivalently}\quad \bar \gamma(\bar \alpha, \epsilon)&= \frac{\gamma\big(\alpha(\bar \alpha), \epsilon\big) \big( 1+ \epsilon \bar \alpha \partial_{\bar \alpha} \bar \delta \big)} {1+s \gamma\big(\alpha(\bar \alpha), \epsilon\big) \bar \alpha \partial_{\bar \alpha} \bar \delta }.
	\end{align*} 
\end{theorem}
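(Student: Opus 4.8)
The plan is to start from the general-scheme relation \cref{anomalous_dimension}, which already expresses $\bar\gamma$ through $\bar\gamma_0$, $\bar\gamma_1$, and $\partial_{\bar\alpha}\bar\gamma_0$, and to rewrite every quantity on its right-hand side in terms of the MOM data $\gamma(\alpha,\epsilon)$ and the shift $\delta(\alpha,\epsilon)$. The first step is to read off the two relevant coefficients. Evaluating the defining relation \cref{G_shifted} and its $L$-derivative at $L=0$ (equivalently, reading the $k=0,1$ cases of \cref{gamma_delta_rescaling} and using $\gamma_0=1$ from \cref{G_expansion}) gives, as functions of $\alpha$,
\begin{align*}
\bar\gamma_0\big(\bar\alpha(\alpha),\epsilon\big) &= G_\ren(\alpha,\epsilon,\delta), &
\bar\gamma_1\big(\bar\alpha(\alpha),\epsilon\big) &= \partial_L G_\ren(\alpha,\epsilon,L)\big|_{L=\delta}.
\end{align*}
These are the only coefficients entering \cref{anomalous_dimension}.

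The second step eliminates the $L$-derivative using the MOM Callan--Symanzik equation \cref{Callan_Symanzik_equation_1}. Evaluated at $L=\delta$ it yields $\bar\gamma_1(\bar\alpha(\alpha)) = \gamma\,G_\ren(\alpha,\epsilon,\delta) + \beta\,(\partial_\alpha G_\ren)(\alpha,\epsilon,\delta)$, where $\beta=s\alpha\gamma-\epsilon\alpha$. The partial derivative $(\partial_\alpha G_\ren)(\alpha,\epsilon,\delta)$ here differs from the total $\alpha$-derivative of $\bar\gamma_0(\bar\alpha(\alpha))=G_\ren(\alpha,\epsilon,\delta(\alpha))$ because $\delta$ itself depends on $\alpha$; the chain rule $\frac{\d}{\d\alpha}G_\ren(\alpha,\epsilon,\delta) = (\partial_\alpha G_\ren)(\alpha,\epsilon,\delta) + \bar\gamma_1(\bar\alpha(\alpha))\,\partial_\alpha\delta$ lets me trade the partial for the total derivative. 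Eliminating the partial derivative between these two equations produces the compact identity $\bar\gamma_1\,(1+\beta\,\partial_\alpha\delta) = \gamma\,\bar\gamma_0 + \beta\,\frac{\d}{\d\alpha}\bar\gamma_0$, with all objects now functions of $\alpha$.

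The third step is the change of variables $\alpha\leftrightarrow\bar\alpha$. Using $\partial_\alpha\bar\alpha$ from \cref{delta_deltabar_chain_rule} I convert the operator $\bar\alpha\partial_{\bar\alpha}$ acting on $\bar\gamma_0$ into an $\alpha$-derivative, $\bar\alpha\partial_{\bar\alpha}\bar\gamma_0\big|_{\bar\alpha(\alpha)} = \frac{\alpha}{1-\epsilon\alpha\partial_\alpha\delta}\,\frac{\d}{\d\alpha}\bar\gamma_0(\bar\alpha(\alpha))$. Substituting this and the $\bar\gamma_1$-identity into \cref{anomalous_dimension} and clearing the common factor $(1-\epsilon\alpha\partial_\alpha\delta)$ turns numerator and denominator into polynomials in $\gamma$, $\beta$, $\partial_\alpha\delta$ and the two quantities $\bar\gamma_0$, $\frac{\d}{\d\alpha}\bar\gamma_0$. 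The decisive simplification is that the $\partial_\alpha\delta$-terms coming from the coupling shift and from the $L$-shift combine so that the coefficient of $\frac{\d}{\d\alpha}\bar\gamma_0$ in the numerator collapses to $\beta+\epsilon\alpha = s\alpha\gamma$; the numerator then equals $\gamma$ times the denominator, leaving $\bar\gamma = \gamma/\big(1+(s\alpha\gamma-\epsilon\alpha)\partial_\alpha\delta\big)$, which is the first claimed form. The second, $\bar\delta$-form follows by a purely algebraic rewriting using the reciprocal identity $\bar\alpha\partial_{\bar\alpha}\bar\delta = \alpha\partial_\alpha\delta/(1-\epsilon\alpha\partial_\alpha\delta)$ from \cref{delta_deltabar_chain_rule}.

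I expect the main obstacle to be precisely this bookkeeping of partial versus total $\alpha$-derivatives across the two nested substitutions $L\mapsto L+\delta$ and $\alpha\mapsto\bar\alpha=\alpha e^{-\epsilon\delta}$. The cancellation of the spurious $\partial_\alpha\delta$ contributions is delicate and hinges entirely on the explicit form $\beta=s\alpha\gamma-\epsilon\alpha$, so that any sign or coefficient error there would destroy the collapse to the stated one-line result.
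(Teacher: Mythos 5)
Your proposal is correct, and it reaches the theorem by a genuinely different route than the paper. The paper's proof works with the full tower of expansion coefficients: it differentiates the series relation \cref{gamma_delta_rescaling} with respect to $\alpha$, eliminates $\partial_\alpha \gamma_j$ via the MOM recurrence \cref{gamma_recurrence}, reshuffles the binomial sums until $\bar\gamma_k$ and $\bar\gamma_{k+1}$ reappear, and then observes that the resulting recurrence for \emph{all} $k$ has \enquote{precisely the expected form} of the Callan--Symanzik recurrence --- thereby simultaneously \emph{establishing} that the shifted scheme obeys \cref{Callan_symanzik_equation} with \cref{MS_beta_gamma}, and reading off $\bar\gamma$. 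You instead take that Callan--Symanzik structure in the shifted scheme as given (admissible, since the paper asserts \cref{Callan_symanzik_equation} and \cref{MS_beta_gamma} before the theorem) and then need only the $k=0,1$ data: $\bar\gamma_0(\bar\alpha(\alpha))=G_\ren(\alpha,\epsilon,\delta)$ and $\bar\gamma_1(\bar\alpha(\alpha))=\partial_L G_\ren|_{L=\delta}$, the MOM equation \cref{Callan_Symanzik_equation_1} evaluated at $L=\delta$, and the two chain rules. Your intermediate identity $\bar\gamma_1\,(1+\beta\,\partial_\alpha\delta)=\gamma\,\bar\gamma_0+\beta\,\tfrac{\d}{\d\alpha}\bar\gamma_0$ and the final collapse both check out, as does the passage to the $\bar\delta$-form via \cref{delta_deltabar_chain_rule}. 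Your route is shorter and avoids all binomial bookkeeping, at the price of assuming rather than proving the scheme-independence of the Callan--Symanzik form; the paper's proof is longer but self-contained on exactly that point.

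One caveat you should make explicit: your collapse of the coefficient of $\tfrac{\d}{\d\alpha}\bar\gamma_0$ to $\beta+\epsilon\alpha=s\alpha\gamma$ works only if the numerator of \cref{anomalous_dimension} reads $\bar\gamma_1 + \epsilon\bar\alpha\,\partial_{\bar\alpha}\bar\gamma_0$, with a \emph{plus} sign. That plus sign is what actually follows from the $j=1$ case of \cref{gamma_recurrence_MS} (solve $\bar\gamma_1=\bar\gamma\,\bar\gamma_0+(s\bar\alpha\bar\gamma-\epsilon\bar\alpha)\,\partial_{\bar\alpha}\bar\gamma_0$ for $\bar\gamma$), and it is the sign consistent with \cref{MS_MOM_gamma_gamma0}; the minus sign printed in \cref{anomalous_dimension} is a typo. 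Taken literally with the minus sign, the coefficient would instead be $\beta-\epsilon\alpha-2\epsilon\alpha\beta\,\partial_\alpha\delta$, which does not collapse, and the derivation would fail. Since the simplification you state matches the corrected sign, your argument is sound --- but you silently overrode the printed equation, and a referee-proof write-up should note the correction rather than cite \cref{anomalous_dimension} as is.
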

\begin{proof}
	\Cref{gamma_recurrence} implies that 
	\begin{align*} 
		\frac{(j+1)	\gamma_{j+1}(\alpha,\epsilon) - \gamma(\alpha,\epsilon) \gamma_j(\alpha,\epsilon)}{  s \alpha \gamma(\alpha,\epsilon) - \epsilon\alpha   } &=      \partial_\alpha  \gamma_j(\alpha, \epsilon).
	\end{align*}
Derive \cref{gamma_delta_rescaling} with respect to $\alpha$ and insert the previous equation.
	\begin{align*}
		& \frac{\partial\bar \alpha}{\partial \alpha} \partial_{\bar \alpha} \bar \gamma_k(\bar \alpha,\epsilon) 
		  = \sum_{j=k}^\infty \binom j k \partial_\alpha \gamma_j(\alpha,\epsilon) \delta^{j-k} (\alpha,\epsilon)\\
		 &\qquad  + \sum_{j=k}^\infty \binom j k (j-k) \gamma_j(\alpha,\epsilon) \delta^{j-k-1} (\alpha,\epsilon) \partial_\alpha \delta\\
		&= \sum_{j=k}^\infty \binom j k \frac{(j+1)	\gamma_{j+1}(\alpha,\epsilon) - \gamma(\alpha,\epsilon) \gamma_j(\alpha,\epsilon)}{  s \alpha \gamma(\alpha,\epsilon) - \epsilon\alpha   }\delta^{j-k} (\alpha,\epsilon) \\
		&\qquad + \sum_{j=k}^\infty \binom j {k+1} (k+1) \gamma_j(\alpha,\epsilon) \delta^{j-k-1} (\alpha,\epsilon) \partial_\alpha \delta\\
		&=  \frac{ (k+1) \bar \gamma_{k+1} (\bar \alpha,\epsilon) }{  s \alpha \gamma(\alpha,\epsilon) - \epsilon\alpha   }\  -\frac{  \gamma(\alpha,\epsilon) \bar \gamma_k(\bar \alpha,\epsilon)}{  s \alpha \gamma(\alpha,\epsilon) - \epsilon\alpha   }  \\
		&\qquad  + \partial_\alpha \delta\cdot (k+1) \bar \gamma_{k+1}(u, \epsilon) .
	\end{align*}
	Solve the equation for $\bar \gamma_{k+1}$. The  chain rule of \cref{delta_deltabar_chain_rule} cancels a denominator, and ensures that the resulting equation,
	\begin{align*}
		&(k+1) \bar \gamma_{k+1} (\bar \alpha,\epsilon) \\
		&= \left(\frac{ s \bar \alpha \gamma(\alpha,\epsilon) }{ 1+ \left(   s \alpha \gamma(\alpha,\epsilon) - \epsilon\alpha   \right)  \partial_\alpha \delta  }  - \epsilon \bar \alpha   \right)\partial_{\bar \alpha} \bar \gamma_k(\bar \alpha,\epsilon)  \\
		&\qquad 		+\frac{  \gamma(\alpha,\epsilon)  \bar \gamma_k(\bar \alpha,\epsilon) 	 }{ 1+ \left(   s \alpha \gamma(\alpha,\epsilon) - \epsilon\alpha   \right)  \partial_\alpha \delta  },
	\end{align*}
	has precisely the expected form
	\begin{align*}
		&(k+1) \bar \gamma_{k+1}(\bar \alpha,\epsilon) \\
		&= \Big(s\bar \alpha \bar \gamma(\alpha,\epsilon) - \epsilon \bar \alpha \Big) \partial_{\bar \alpha} \gamma_k(\bar \alpha,\epsilon) + \bar \gamma(\bar \alpha,\epsilon) \bar \gamma_k(\bar \alpha,\epsilon). 
	\end{align*}
	The second formula then follows from \cref{delta_deltabar_chain_rule}.
\end{proof}
The special case $\epsilon=0$ of \cref{lem:MS_anomalous_dimension}  had been given already in \cite{balduf_dyson_2023}.

\subsection{Beta function in minimal subtraction}
The construction in \cref{sec:scheme-dependent} has been for an arbitrary renormalization scheme obtained through a shift $\delta(\alpha,\epsilon)$. We now specialize to MS.

The defining property of MS is that the counterterms only include pole terms in $\epsilon$. We need to translate this to a statement about the renormalization group functions, because our formalism does not involve the counterterms explicitly. To this end, we use that the counterterm is related to the beta function by \cref{Za_beta}. Introduce the function $B(\bar \alpha,\epsilon):= \bar \beta(\bar \alpha,\epsilon) + \bar \alpha \epsilon$, then
\begin{align*}
Z_\alpha &= \exp \left(  \int_0^{\bar \alpha} \frac{\d u}{u} \frac{B(u,\epsilon)}{u\epsilon-B(u,\epsilon)} \right)  \\
&= \exp \left( \int_0^\alpha \frac{\d u}{u} \frac{B(u,\epsilon)}{u\epsilon} \sum_{j=0}^\infty \left( \frac{B(u,\epsilon)}{u \epsilon} \right) ^j  \right) .
\end{align*}
The right hand side should be viewed as a power series in $\bar \alpha$. In MS, it is required to consist of pole terms in $\epsilon$ exclusively. This implies that $\frac{B(u,\epsilon)}{u \epsilon}$ consists  of poles only. On the other hand, the beta function itself is regular in $\epsilon$, therefore $B(\bar \alpha,\epsilon)$ does not contain poles in $\epsilon$. The only remaining possibility is that $B(\bar \alpha,\epsilon)$ does not depend on $\epsilon$ at all. We obtain a well-known alternative definition of the MS scheme: The beta function $\bar \beta(\bar\alpha, \epsilon)$ in MS depends on $\epsilon$ only through a single term, 
\begin{align}\label{MS_beta}
\bar \beta(\bar \alpha,\epsilon) &= \bar \beta(\bar \alpha) - \bar \alpha \epsilon .
\end{align}
One can repeat the same argument for the counterterm $Z_2$ and its relation to the anomalous dimension $\gamma(\alpha,\epsilon)$, or one uses \cref{MS_beta_gamma}.
In either case, one finds that in MS the anomalous dimension $\bar \gamma(\bar \alpha,\epsilon) = \bar \gamma(\bar \alpha)$ is entirely independent of $\epsilon$. 
This restricts the dependence of the expansion functions $\gamma_j(\alpha,\epsilon)$ of \cref{G_expansion_MS} on $\epsilon$, but it does not imply that they, too, are independent. Namely \cref{gamma_recurrence} reads
\begin{align}\label{gamma_recurrence_MS}
 \bar\gamma_j(\bar \alpha,\epsilon) &= \frac 1 j\Big(\bar \gamma(\bar \alpha) + \big(s \bar \alpha \bar \gamma(\bar \alpha) - \epsilon \bar \alpha \big) \partial_{\bar \alpha} \Big)\bar \gamma_{j-1}(\bar \alpha, \epsilon) 
\end{align}

\subsection{The linear DSE in MS}

In the linear case, $s=0$, the second formula in \cref{lem:MS_anomalous_dimension} simplifies and the anomalous dimensions of MOM and MS are related via
\begin{align}\label{MS_anomalous_dimension_linear}
\bar \gamma(\bar \alpha, \epsilon)= \gamma(\alpha(\bar \alpha), \epsilon) \cdot \big( 1+ \epsilon \bar \alpha \partial_{\bar \alpha} \bar \delta(\bar \alpha, \epsilon) \big) .
\end{align}
The anomalous dimension in MS, as a function of $\bar \alpha$, is independent of $\epsilon$. Consequently, it coincides with the limit $\epsilon \rightarrow 0$ of the anomalous dimension in MOM, using $\bar \alpha= \alpha + \mc O(\epsilon)$:
\begin{align}\label{MS_linear_anomalous_dimension}
\bar \gamma(\bar \alpha, \epsilon) = \bar \gamma(\bar \alpha) &= \gamma(\bar \alpha, 0)=\gamma( \alpha).
\end{align}
All our definitions have been engineered such that the Callan-Symanzik equation (\cref{linear_Callan_Symanzik}) holds, in exactly the same form, in all schemes.   Consequently, the solution of this equation in MS  has the same form as in MOM, namely \cref{MOM_integral_solution}, with $\gamma$ from \cref{MS_linear_anomalous_dimension}. The only difference is that in MS, we do not have the boundary condition at $L=0$, and therefore, the solution is multiplied by the undetermined factor $\bar \gamma_0(\bar \alpha, \epsilon)$ of \cref{G_expansion_MS}: 
\begin{align}\label{G_MS_integral}
	\bar G_{  \ren}(\bar \alpha, \epsilon, L) &= \bar \gamma_0(\bar \alpha, \epsilon)\cdot \exp \left( \int_{\bar \alpha e^{-\epsilon L}}^{\bar \alpha} \d   u \frac{\bar \gamma(  u)}{  u \epsilon } \right) \nonumber \\
	&=\bar \gamma_0(\bar \alpha, \epsilon) \cdot G_\ren(\bar \alpha, \epsilon, L).
\end{align}
Setting $L=0$ in \cref{G_shifted}, one obtains
\begin{align}\label{gamma0_delta}
	\bar \gamma_0(\bar \alpha (\alpha), \epsilon) &= G_\ren(\alpha, \epsilon, \delta(\alpha, \epsilon)),
\end{align}
We remark that this construction is consistent, in the sense that one can insert   \cref{MS_anomalous_dimension_linear} into the integral of \cref{G_MS_integral},  do a  change of variables $\bar \alpha = \alpha e^{-\epsilon \delta}$  with \cref{delta_deltabar_chain_rule}, and identify the integral \cref{MOM_integral_solution}, to recover the MOM solution written in terms of $\alpha$:
\begin{align*}
	\exp \left( \int_{\bar \alpha e^{-\epsilon L}}^{\bar \alpha} \d \bar u \frac{\bar \gamma(\bar u)}{\bar u \epsilon } \right) 
	&= \frac{G_\ren \big(\alpha, \epsilon, L+\delta(\alpha,\epsilon) \big)}{G_\ren(\alpha, \epsilon, \delta)}.
\end{align*}

Another useful perspective on the quantity $\bar \gamma_0$ is to view an overall multiplicative scaling of $G_\ren$ as a scaling of the counterterm $Z_2$ according to \cref{Gren_Z}, namely $\bar \gamma_0(\bar \alpha, \epsilon) Z_2(\bar \alpha, \epsilon)  = \bar Z_2(\bar \alpha, \epsilon)$.
By \cref{Za_beta}, the counterterm $Z_2$ determines the anomalous dimension,
\begin{align}\label{MS_MOM_gamma_gamma0}
\bar \gamma(\bar \alpha, \epsilon) &= \bar \alpha \epsilon \partial_\alpha \ln \big(\bar Z_2(\bar \alpha, \epsilon) \big)  \nonumber \\
&=  \epsilon \bar \alpha \partial_{\bar \alpha} \ln \big(\bar \gamma_0(\bar \alpha, \epsilon)  \big)  + \gamma(\bar \alpha, \epsilon).
\end{align}
This formula relates the MOM and MS anomalous dimensions similar to \cref{MS_anomalous_dimension_linear}, but in terms of $\bar \gamma_0$ instead of $\delta$. It has the additional benefit that it does not involve a change of variables $\alpha \leftrightarrow \bar \alpha$. We can integrate it, using   \cref{MS_linear_anomalous_dimension}, to compute $\bar \gamma_0$ explicitly:
\begin{align}\label{gamma0_integral}
\bar \gamma_0(\bar \alpha, \epsilon) &= \exp \left( -\int_0^{\bar \alpha} \frac{\d  u}{  u} \frac{ \gamma(  u , \epsilon)-\gamma(  u)}{ \epsilon}\right).
\end{align}

\subsection{Physical spacetime dimension}
We still consider the linear DSE, $s=0$, in MS. 
For the physical limit $\epsilon=0$, where $\alpha=\bar \alpha$, the situation simplifies further. The MOM Green's function is then the scaling solution of \cref{MOM_scaling_solution},  $G_\ren(\alpha,0,  \delta) = e^{\delta(\alpha) \cdot \gamma(\alpha)}$. Consequently,  \cref{gamma0_delta} implies 
 $\bar \gamma_0(  \alpha )  = e^{\delta(\alpha ) \cdot \gamma(\alpha)}$.
\begin{theorem}\label{thm:DSE_solution_linear}
The solution of a linear DSE, in the physical dimension $\epsilon=0$, is  given by
\begin{align*}
	\bar G_{ \ren}(\alpha, L) &= \bar \gamma_0(\alpha) e^{L \gamma(\alpha)} = \exp \Big(\big(L+\delta(\alpha)\big)\gamma(\alpha)\Big),
\end{align*}
where $\gamma(\alpha)$ is the anomalous dimension in MOM,  computed from \cref{MOM_algebraic_equation}, and 
\begin{align*} 
	\bar \gamma_0(\bar \alpha, 0) &= \exp \left( -\int_0^{\bar \alpha} \frac{\d  u}{  u} g(  u ) \right), \\
	\delta(\alpha, 0) &= \frac{\ln \bar \gamma_0(  \alpha )}{\gamma(\alpha)}, ~\text{ using } \bar \alpha = \alpha + \mc O(\epsilon)\nonumber.
\end{align*}
The function $g(\alpha)$ will be determined in \cref{lem:g_Mellin}.
\end{theorem}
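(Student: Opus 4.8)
The plan is to read off the physical-limit solution by sending $\epsilon\to 0$ in the two general linear-MS formulas already derived, the factorized Green's function \cref{G_MS_integral} and the explicit integral \cref{gamma0_integral}, and combining them with the MOM scaling solution \cref{MOM_scaling_solution}. Throughout I use that at $\epsilon=0$ the two expansion parameters coincide, $\bar\alpha=\alpha$, and that the MS and MOM anomalous dimensions agree there, $\bar\gamma(\alpha)=\gamma(\alpha)$, by \cref{MS_linear_anomalous_dimension}.

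First I would establish the closed form of the Green's function. \Cref{G_MS_integral} factorizes $\bar G_{\ren}(\bar\alpha,\epsilon,L)=\bar\gamma_0(\bar\alpha,\epsilon)\,G_\ren(\bar\alpha,\epsilon,L)$. Setting $\epsilon=0$ and inserting $G_\ren(\alpha,0,L)=e^{L\gamma(\alpha)}$ from \cref{MOM_scaling_solution} gives the first equality $\bar G_{\ren}(\alpha,L)=\bar\gamma_0(\alpha)\,e^{L\gamma(\alpha)}$. For the second equality I would use \cref{gamma0_delta}, $\bar\gamma_0(\bar\alpha(\alpha),\epsilon)=G_\ren(\alpha,\epsilon,\delta(\alpha,\epsilon))$; evaluating its right-hand side on the scaling solution at $\epsilon=0$ yields $\bar\gamma_0(\alpha)=e^{\delta(\alpha)\gamma(\alpha)}$ (as already noted in the paragraph preceding the statement), whence $\bar\gamma_0(\alpha)\,e^{L\gamma(\alpha)}=\exp\big((L+\delta(\alpha))\gamma(\alpha)\big)$.

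Next I would extract the two explicit expressions. For $\bar\gamma_0$ I take $\epsilon\to 0$ in \cref{gamma0_integral}, whose integrand carries the quotient $\big(\gamma(u,\epsilon)-\gamma(u)\big)/\epsilon$. This is an indeterminate $0/0$ form, since the numerator vanishes at $\epsilon=0$ by \cref{MS_linear_anomalous_dimension}. Because $\gamma(u,\epsilon)$ is built order by order as a power series in $\epsilon$ about $\gamma(u)=\gamma(u,0)$ through the linear relation \cref{linear_ODE_epsilon}, the quotient is regular at $\epsilon=0$ and its limit is the first $\epsilon$-derivative,
\[
g(u) := \lim_{\epsilon\to 0}\frac{\gamma(u,\epsilon)-\gamma(u)}{\epsilon}=\partial_\epsilon\gamma(u,\epsilon)\big|_{\epsilon=0},
\]
which reproduces the stated integral for $\bar\gamma_0(\bar\alpha,0)$. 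Taking logarithms of $\bar\gamma_0(\alpha)=e^{\delta(\alpha)\gamma(\alpha)}$ then gives $\delta(\alpha,0)=\ln\bar\gamma_0(\alpha)/\gamma(\alpha)$.

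These steps are essentially bookkeeping; the only analytic point needing care is the interchange of the limit $\epsilon\to 0$ with the $u$-integration and the regularity of the difference quotient, both of which rest on the (formal) analyticity of $\gamma(\alpha,\epsilon)$ in $\epsilon$ near the origin provided by the order-by-order construction in \cref{sec:linear_MOM}. The genuinely new content lies outside this theorem: evaluating $g(\alpha)=\partial_\epsilon\gamma(\alpha,\epsilon)|_{\epsilon=0}$ in closed form from the Mellin transform of the kernel, deferred to \cref{lem:g_Mellin}. I expect that identification to be the real obstacle, as it should require differentiating the implicit equation \cref{linear_ODE_epsilon} in $\epsilon$ and re-expressing the result through the series coefficients $T_j(\rho)$ of \cref{mellin_T_series}.
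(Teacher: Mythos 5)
Your proposal is correct and follows essentially the same route as the paper: the paper likewise obtains the theorem by setting $\epsilon=0$ in the factorization \cref{G_MS_integral} together with the scaling solution \cref{MOM_scaling_solution}, reads off $\bar\gamma_0(\alpha)=e^{\delta(\alpha)\gamma(\alpha)}$ from \cref{gamma0_delta}, and identifies the integrand of \cref{gamma0_integral} at $\epsilon=0$ with $g(\alpha)=[\epsilon^1]\gamma(\alpha,\epsilon)$, deferring the closed form of $g$ to \cref{lem:g_Mellin}. Your extra remark on the regularity of the difference quotient and the interchange of limit and integration is a harmless refinement of the same argument, which the paper treats as automatic at the level of formal power series.
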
 

The $\epsilon$-independent function  $\bar \gamma_0(\bar \alpha)=\bar \gamma_0(\bar \alpha, 0)$ is, by \cref{gamma0_integral}, entirely determined by the order $\epsilon^1$-term of the MOM anomalous dimension, which we call  $g(\alpha):=[\epsilon^1]\gamma(\alpha, \epsilon)$. This, in turn, can be computed from the $\epsilon$-dependent ODE-version of the Dyson-Schwinger equation, \cref{linear_ODE_epsilon}.

\begin{theorem}\label{lem:g_Mellin}
	With the expansion functions of \cref{mellin_T_series}, $\frac{1}{F(\epsilon, \rho+\epsilon)}= T_0(\rho) + \epsilon T_0 (\rho) T_1(\rho) + \ldots$, the order $[\epsilon^1]$ of the MOM anomalous dimension $\gamma(\alpha,\epsilon)$ of a linear DSE is given by 
	\begin{align*}
		g(\alpha) &= \alpha \frac{\frac 12 \partial^2_\rho T_0 \big|_{\rho=\gamma(\alpha)} \cdot   \partial_\alpha \gamma(\alpha)  - T_1(\gamma(\alpha))}{\partial_\rho T_0\big|_{\rho=\gamma(\alpha)}}.
	\end{align*}
\end{theorem}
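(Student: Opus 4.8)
The plan is to read off the order-$\epsilon^1$ coefficient of the $\epsilon$-dependent pseudo-differential equation \cref{linear_ODE_epsilon}. I abbreviate $H(\epsilon,\rho):=\tfrac{1}{F(\epsilon,\rho+\epsilon)}$ and introduce the substitution operator $\hat\rho:=\gamma(\alpha,\epsilon)-\epsilon\alpha\partial_\alpha$. Since $\hat\rho\cdot 1=\gamma(\alpha,\epsilon)$, the explicit factor $1/\rho$ in \cref{linear_ODE_epsilon} is absorbed, and the equation is equivalent to the "acting on $1$" form
\begin{equation*}
H(\epsilon,\hat\rho)\cdot 1 = \alpha,
\end{equation*}
where the operator is understood through its Taylor series $H(\epsilon,\rho)=\sum_n h_n(\epsilon)\rho^n$, with $\hat\rho^n\cdot 1$ evaluated by repeated application to the constant function. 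Into this I insert the two expansions $\gamma(\alpha,\epsilon)=\gamma(\alpha)+\epsilon\, g(\alpha)+\mc O(\epsilon^2)$ and, from \cref{mellin_T_series}, $H(\epsilon,\rho)=T_0(\rho)+\epsilon\, T_0(\rho)T_1(\rho)+\mc O(\epsilon^2)$, and collect powers of $\epsilon$. The $\mc O(\epsilon^0)$ part merely reproduces the algebraic equation \cref{MOM_algebraic_equation}, $T_0(\gamma(\alpha))=\alpha$, fixing the leading $\gamma(\alpha)$.

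The heart of the calculation, and the main obstacle, is the $\mc O(\epsilon^1)$ part of $T_0(\hat\rho)\cdot 1$. Here $\hat\rho=\gamma(\alpha)+\epsilon\,(g-\alpha\partial_\alpha)+\mc O(\epsilon^2)$ is the sum of a multiplication operator $A:=\gamma(\alpha)\cdot$ and a piece $\epsilon B$ with $B:=g-\alpha\partial_\alpha$ that does \emph{not} commute with $A$, so one cannot simply differentiate $T_0$. I would expand $(A+\epsilon B)^n\cdot 1=A^n\cdot 1+\epsilon\sum_{k=0}^{n-1}A^k B A^{n-1-k}\cdot 1+\mc O(\epsilon^2)$ and evaluate each summand on $1$. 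Using $A^m\cdot 1=\gamma(\alpha)^m$ and $\partial_\alpha\gamma(\alpha)^m=m\,\gamma(\alpha)^{m-1}\partial_\alpha\gamma(\alpha)$, the $k$-th summand equals $g\,\gamma(\alpha)^{n-1}-\alpha(n-1-k)\gamma(\alpha)^{n-2}\partial_\alpha\gamma(\alpha)$, and summing over $k$ produces the combinatorial factors $n$ and $\tfrac{n(n-1)}{2}$. Resumming against the Taylor coefficients of $T_0$ turns these into derivatives at $\rho=\gamma(\alpha)$, giving
\begin{equation*}
[\epsilon^1]\,T_0(\hat\rho)\cdot 1 = g\,\partial_\rho T_0\big|_{\rho=\gamma(\alpha)}-\tfrac12\,\alpha\,\partial_\alpha\gamma(\alpha)\,\partial_\rho^2 T_0\big|_{\rho=\gamma(\alpha)}.
\end{equation*}
The factor $\tfrac12$ and the appearance of $\partial_\rho^2 T_0$ are exactly what the operator ordering (the $\tfrac{n(n-1)}{2}$) generates, so getting this term right is the crux of the proof.

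It then remains to add the contribution of the explicit $\epsilon$-term of $H$: to leading order $\hat\rho\cdot 1=\gamma(\alpha)$, hence $[\epsilon^1]\big(\epsilon\,T_0(\hat\rho)T_1(\hat\rho)\cdot 1\big)=T_0(\gamma(\alpha))T_1(\gamma(\alpha))=\alpha\,T_1(\gamma(\alpha))$ by \cref{MOM_algebraic_equation}. Because the right-hand side $\alpha$ carries no $\epsilon$, its $\epsilon^1$-coefficient vanishes, so I set the sum of the two contributions to zero,
\begin{equation*}
g\,\partial_\rho T_0\big|_{\rho=\gamma(\alpha)}-\tfrac12\,\alpha\,\partial_\alpha\gamma(\alpha)\,\partial_\rho^2 T_0\big|_{\rho=\gamma(\alpha)}+\alpha\,T_1(\gamma(\alpha))=0,
\end{equation*}
and solve this linear relation for $g(\alpha)$, which reproduces the claimed formula. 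Everything outside the noncommutative bookkeeping of the second paragraph is a routine order-by-order expansion, so I expect that operator-ordering step to be the only genuine difficulty.
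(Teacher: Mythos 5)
Your proposal is correct and takes essentially the same approach as the paper's proof: you extract the order $[\epsilon^1]$ of \cref{linear_ODE_epsilon}, handle the noncommutativity of $\gamma+\epsilon(g-\alpha\partial_\alpha)$ termwise so that the factors $n$ and $\tfrac{n(n-1)}{2}$ resum into $\partial_\rho T_0$ and $\tfrac12\partial_\rho^2 T_0$ at $\rho=\gamma(\alpha)$, and evaluate the explicit $\epsilon\,T_0T_1$ term via $T_0(\gamma)=\alpha$, exactly as the paper does. The only cosmetic difference is your \enquote{acting on $1$} formulation of the substitution operator, which is equivalent to the paper's convention of letting the operator act on $\gamma$ itself after including the factor $1/\rho$.
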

\begin{proof}
	In the proof, we write $\gamma$ for $\gamma(\alpha)=\gamma(\alpha,0)$ and $g$ for $g(\alpha)$. 
	The  ODE \cref{linear_ODE_epsilon} with \cref{mellin_T_series} is
	\begin{align}\label{ODE_expansion_epsilon} 
		\Big(T_0(\rho) + \epsilon T_0(\rho)T_1(\rho)  + \ldots \Big)_{\rho \mapsto \gamma(\alpha, \epsilon)-\epsilon \alpha \partial_\alpha }=\alpha. 
	\end{align}
	We  take the order $[\epsilon^1]$ of this equation, the $\ldots$ terms are of higher order and do not contribute. The second summand is already at order $\epsilon^1$, hence we merely replace $\rho \mapsto \gamma $ and use \cref{MOM_algebraic_equation}, $T_0(\gamma)=\alpha$, and the summand becomes
	\begin{align*}
		\big( T_0(\rho)T_1(\rho) \big) _{\rho \mapsto \gamma } &= \alpha T_1(\gamma ). 
	\end{align*}
	For the first summand in \cref{ODE_expansion_epsilon}, it is clear that we need at most the linear order in $\epsilon$ of the argument, $\gamma(\alpha, \epsilon)-\epsilon \alpha \partial_\alpha = \gamma + \epsilon(g -\alpha \partial_\alpha) + \mc O(\epsilon^2)$. We write $T_0(\rho)= \sum_{j=1}^\infty  t_j \rho^j$ and consider a fixed order $j$. 
	\begin{align*}
		\left[ \epsilon^1 \right]  \left( \gamma  + \epsilon(g -\alpha \partial_\alpha)  \right) ^j =  \sum_{k=0}^{j-1} \gamma^k (g-\alpha \partial_\alpha) \gamma^{j-k-1}.
	\end{align*}
	The factor $g$ can be pulled out, giving $\gamma^{j-k-1}\cdot g$. With the chain rule, the derivative term becomes 
	\begin{align*}
	 \sum_{k=0}^{j-1} \gamma^k  (j-k-1)\gamma^{j-k-2} \partial_\alpha \gamma = \frac{j(j-1)}{2}\gamma^{j-2} \cdot  \partial_\alpha \gamma.
	\end{align*}
	In both cases, the summand can be interpreted as a summand of a derivative of $T_0$, therefore
	\begin{align*}
		\left[ \epsilon^1 \right] T_0(\ldots) &=  \left(  \partial_\rho T_0 \right) _{\rho \rightarrow \gamma} \cdot g - \frac \alpha 2\left( \partial^2 \rho T_0 \right) _{\rho \rightarrow \gamma} \cdot \partial_\alpha \gamma. 
	\end{align*}
	We see that in the order $[\epsilon^1]$ of \cref{ODE_expansion_epsilon}, the sought-after $g $ appears as a factor,
	\begin{align*}
		\left(  \partial_\rho T_0 \right) _{\rho \rightarrow \gamma} \cdot g - \frac \alpha 2 \left( \partial^2 \rho T_0 \right) _{\rho \rightarrow \gamma} \cdot \partial_\alpha \gamma + \alpha T_1(\gamma)=0.
	\end{align*}
\end{proof}

As long as we are able to compute the Mellin transform $T_0, T_1$ in closed form, we obtain $g(\alpha)$ in closed form. Furthermore, notice that 
\begin{align*}
\partial_\rho^2 T_0 \big|_{\rho=\gamma(\alpha)} \partial_\alpha \gamma(\alpha) = \partial_\alpha \big(\partial_\rho T_0\big|_{\rho=\gamma(\alpha)}\big)
\end{align*}
and therefore the integral
\begin{align}\label{integral_simplification}
&\int_0^{\bar \alpha}\frac{\d u}{u}\; u \frac{\frac 12 \partial_\rho^2 T_0\big|_{\rho=\gamma(u)} \partial_u \gamma(u)}{\partial_\rho T_0 \big|_{\rho=\gamma(u)}}\\
&= \int_0^{\bar \alpha}\frac 12 \d u \; \partial_u \ln \left( \partial_\rho T_0 \big|_{\rho=\gamma(u)}  \right)  = \frac 12 \ln \left( \partial_\rho T_0 \big|_{\rho=\gamma(\bar \alpha)} \nonumber  \right) 
\end{align}
can always be solved analytically. 
The only potentially non-trivial integration in the computation of $\bar \gamma_0$ in \cref{thm:DSE_solution_linear} is that of the second summand, $\frac{T_1}{\partial_\rho T_0}$ in \cref{lem:g_Mellin}, so that we have a good chance of finding a closed-form solution for $\bar \gamma_0$ provided we know $T_1$ and $T_0$ in closed form, as claimed in the abstract. In all examples considered below, the integration can be done analytically.

\section{Examples}\label{sec:examples}

In the remainder of the article, we demonstrate how to use the formalism for some examples of linear DSEs. As will be explained around \cref{alpha_msbar}, we will actually be using the MS-bar scheme in order to streamline the notation.   All computations are implemented in a Mathematica notebook that is available as an attachment to the electronic version of the article and from the author's website %\cite{balduf_ladders_rainbows_in_msnb_2025}
\footnote{\url{https://paulbalduf.com/research}}.
To keep the paper short, we restrict ourselves to rather simple examples.
Recall that the Green's function $G_\ren$ is a projection onto a tree level tensor, therefore, it is a scalar quantity regardless of whether the fields are  scalars themselves. The formalism allows for the kernel diagram to have arbitrary loop number, however, since we are considering only one kernel, it is physically sensible to choose the one that has lowest loop number.

In $D=D_0-2\epsilon$ dimensions, with propagator powers $1$ and $1-\rho$, the  1-loop multiedge diagram of \cref{fig:B_phi3} has superficial degree of convergence  
\begin{align*}
\omega&= 2- \frac{D_0}{2}-\rho +\epsilon. 
\end{align*}
The multiedge  has two vertices, each of which has a Feynman rule $(-i\lambda)$. Its Minkowski-space integral evaluates to
\begin{align*}
&\tilde F(\epsilon,\rho) =\scalemath{.95}{ (p^2)^{2-\frac{D_0}{2}-\rho+\epsilon}   \int \frac{\d^D k}{(2\pi)^{D}} \frac{i}{(k+p)^2}\frac{i}{ (k^2)^{1-\rho}}(-i\lambda)^2 }\\
&=\scalemath{.82}{\frac{i\lambda^2}{(4\pi)^{\frac{D_0}{2}-\epsilon}} \frac{\Gamma \left( 2-\frac{D_0}{2}-\rho  +\epsilon \right) \Gamma \left( \frac{D_0}{2}-\epsilon -1+\rho \right) \Gamma \left( \frac{D_0}{2}-\epsilon -1 \right)  }{\Gamma \left( D_0-2-2 \epsilon + \rho \right) \Gamma(1-\rho)}}.
\end{align*}
We define the coupling $\alpha$ such that the power in $\alpha$ coincides with the loop number. Hence, $\alpha \propto \lambda^2$. Moreover, we switch to MS-bar instead of MS renormalization conditions in order to absorb powers of $(4\pi)$ and the Euler Mascheroni constant $\gamma_E$ into the coupling, so that
\begin{align}\label{alpha_msbar}
\alpha &:= \frac{\lambda^2}{(4\pi)^{\frac{D_0}{2}}} \left( \frac{4\pi}{e^{\gamma_E }} \right) ^\epsilon .
\end{align}
Notice that if we want to recover the MS-solution from MS-bar, the $\epsilon$ expansion in \cref{mellin_T_series} acquires additional powers of $\gamma_E$ and $\ln(4\pi)$ which are absent in MS-bar. The order $[\epsilon^0]$ is unchanged, consistent with our above result (\cref{lem:MS_anomalous_dimension}) that for a linear DSE, the   anomalous dimension at $\epsilon=0$ is the same in all schemes. The transformation MS $\leftrightarrow$ MS-bar has been spelled out in more detail in  \cite{balduf_dyson_2023}.

The overall factor $i$ in the Mellin transform gets absorbed by the definition of the 1PI self energy $i \Sigma$. Since the full propagator is a geometric series in 1PI propagators, the DSE in \cref{DSE_combinatorial_s} for a propagator 1PI Green's function $G_\ren $ should have a negative sign,  $G_\ren = 1 - \alpha (1-\renop)B_+[G_\ren^{1+s}]$. We will continue using our original definition \cref{DSE_combinatorial_s},  which means that formally  the physical value of $\alpha$ is negative for propagator DSEs
We leave out the prefactor $i\alpha$ from the Mellin transform by setting $i\alpha \tilde F(\epsilon,\rho)=  F(\epsilon,\rho)$, so that now
\begin{align}\label{mellin_multiedge}
&F(\epsilon,\rho) \\
&=  \frac{\Gamma \left( 2-\frac{D_0}{2}-\rho  +\epsilon \right) \Gamma \left( \frac{D_0}{2}-\epsilon -1+\rho \right) \Gamma \left( \frac{D_0}{2}-\epsilon -1 \right)  }{e^{-\gamma_E \epsilon}\cdot \Gamma \left( D_0-2-2 \epsilon + \rho \right) \Gamma(1-\rho)}.\nonumber 
\end{align}

\subsection{Yukawa rainbows}
Massless Yukawa theory  contains fermions $\psi$  and  mesons $\phi$ with an interaction vertex $\lambda \bar \psi \psi \phi$, and is perturbatively renormalizable at $D_0=4$. After projection to the tree level tensor structure, see \cite{delbourgo_dimensional_1996,broadhurst_exact_2001,borinsky_nonperturbative_2020}, the Feynman integral for the fermion propagator coincides with the scalar multiedge. 
Setting $D_0=4$, the Mellin transform \cref{mellin_multiedge} is
\begin{align*}
F(\epsilon; \rho+\epsilon) &= \frac{-  e^{ \gamma_E \epsilon} \pi  \Gamma(1-\epsilon)}{ \sin(\pi \rho)\Gamma(1-\epsilon-\rho) \Gamma(2-\epsilon +\rho) }.
\end{align*}
The functions of \cref{mellin_T_series} are
\begin{align*}
T_0(\rho) &= -\rho\cdot (1+\rho)\\
T_1(\rho) &=  - H_{-\rho} - H_{1+\rho}.
\end{align*}
Here, $H_n=\sum_{k=0}^n \frac 1 k$ is the harmonic number, whose analytic continuation is the digamma function $\psi(z)= \Gamma'(z)/\Gamma(z)$ according to $H_z = \gamma_E + \psi(z+1)$. 

The anomalous dimension at $\epsilon=0$, in all schemes, is   computed from \cref{MOM_algebraic_equation},
\begin{align*}
\alpha &= T_0(\gamma) = -\gamma(1+\gamma), 
\end{align*}
hence we reproduce the  result of \cite{delbourgo_dimensional_1996,kreimer_etude_2008} for the fermion propagator in MOM:
\begin{align}\label{4D_gamma}
G_\ren &= e^{L \gamma(\alpha)},\\
\gamma(\alpha) &= \frac{-1+\sqrt{1-4\alpha}}{2}=-\alpha-\alpha^2-2\alpha^3-5\alpha^4 - \ldots. \nonumber 
\end{align}
The series coefficients are Catalan numbers. We determine the function $g(\alpha)$  from \cref{lem:g_Mellin}, where
\begin{align*}
\partial_\rho T_0 &= -1-2\rho, \qquad 
\frac 12\partial^2_\rho T_0 = -1, \qquad 
  \partial_\alpha \gamma  = \frac{-1}{\sqrt{1-4\alpha}}.
\end{align*}
This leads to an expression with digamma functions $\psi$,
\begin{align*}
g &= \frac{-\alpha }{1-4\alpha} - \alpha \frac{  2 \gamma_E + \psi\left( \frac{3-\sqrt{1-4\alpha}}{2} \right)  + \psi \left( \frac{3+ \sqrt{1-4\alpha}}{2} \right)   }{\sqrt{1-4\alpha}}\\
&= \scalemath{.95}{-2\alpha -7 \alpha^2 +(2 \zeta(3)-26) \alpha^3 + (8 \zeta(3) -99)\alpha^4+\ldots }
\end{align*}
To find the offset function $\bar \gamma_0(\alpha,0)$ according to \cref{thm:DSE_solution_linear}, we need to integrate $g(\alpha)/\alpha$. This integral is easier than it looks because $\psi$ is the derivative of the Euler gamma function. One obtains  
\begin{align}\label{Yukawa_gamma0}
 \bar \gamma_0(\alpha) &= \frac{e^{\gamma_E(1-\sqrt{1-4\alpha})} \Gamma \left( \frac{3-\sqrt{1-4\alpha}}{2} \right)  }{(1-4\alpha)^{\frac 14} \Gamma \left( \frac{3+\sqrt{1-4\alpha}}{2} \right)  } = \frac{e^{-2\gamma_E \gamma} ~\Gamma \left( 1-\gamma \right)  }{\sqrt{1+2\gamma}~ \Gamma \left( 2+\gamma \right)  } \nonumber \\
 &=\scalemath{.9}{ 1 + 2\alpha +\frac{11}{2}\alpha^2 + \left( 17- \frac 2 3 \zeta(3) \right) \alpha^3 + \ldots}. 
\end{align}
This confirms  the formula \cite[eq. (4.15)]{balduf_dyson_2023}, which had been discovered experimentally from matching the first 20 terms of the series expansion. In particular, the general result of \cref{lem:g_Mellin} explains  the empirical observation  that $\bar \gamma_0$ and $\delta$ contain the anomalous dimension $\gamma(\alpha)$ as \enquote{building blocks}. From \cref{thm:DSE_solution_linear}, we find the closed formula
\begin{align}
&\delta(\alpha) =-2 \gamma_E +   \frac{2\ln \left( (1-4\alpha)^{\frac 14} \frac{\Gamma \left( \frac{3+\sqrt{1-4\alpha}}{2} \right)  }{\Gamma\left( \frac{3-\sqrt{1-4\alpha}}{2} \right)  } \right) }{ 1-\sqrt{1-4\alpha}} \\
&= \scalemath{.9}{-2 - \frac 3 2 \alpha + \left( \frac 2 3 \zeta(3) - \frac{19}{6} \right) \alpha^2 +\left( \frac 4 3 \zeta(3) - \frac{103}{12} \right) \alpha^3+ \ldots .}\nonumber 
\end{align}
With these functions, the exact solution of the Yukawa rainbow DSE in minimal subtraction is
\begin{align*}
	\bar G_{ \ren}(\alpha, L) &= \bar \gamma_0(\alpha) e^{L \gamma(\alpha)} = \exp \Big(\big(L+\delta(\alpha)\big)\gamma(\alpha)\Big).
\end{align*}

\subsection{$\phi^3$ rainbows}
In $D_0=6$, the one-loop multiedge, and its corresponding ladder solution shown in \cref{fig:phi3_ladders}, appear as the propagator correction in $\phi^3$ theory. 
We can immediately apply the formalism. The Mellin transform is the specialization of \cref{mellin_multiedge} to $D_0=6$, 
\begin{align}\label{multiedge_D6_T}
F(\epsilon; \rho+\epsilon) &= \frac{ e^{\gamma_E \epsilon}\pi \Gamma(2-\epsilon)}{\sin(\pi \rho) \Gamma(1-\epsilon-\rho) \Gamma(4-\epsilon+\rho)}, \nonumber  \\
T_0 &= \rho( \rho+1)( \rho+2)(\rho+3), \nonumber  \\
T_1 &= 1-H_{-\rho}- H_{ \rho+3}.
\end{align}
The anomalous dimension is
\begin{align}
\gamma(\alpha) &= \frac{-3 + \sqrt{5+4\sqrt{1+\alpha}}}{2},
\end{align}
which again reproduces \cite{delbourgo_dimensional_1997}. To compute the MS solution, \cref{lem:g_Mellin} results in 
\begin{align*}
g&= \scalemath{.9}{\alpha \frac{5 +  \sqrt{1+\alpha} \big(6+  (3+2\gamma) (-1+H_{-\gamma}+  H_{3+\gamma}) \big)}{ 4(1+\alpha)(3+2\gamma)^2}}\\
&=\frac 4 9 \alpha - \frac{535}{1296}\alpha^2 + \left( \frac{9077}{23328}-\frac{\zeta(3)}{108} \right) \alpha^3 + \ldots 
\end{align*}
Inserting this into \cref{thm:DSE_solution_linear}, we obtain closed-form expressions that confirm the experimental finding of \cite[eq. (5.1)]{balduf_dyson_2023}:
\begin{align}
\bar \gamma_0 &= \frac{6 \sqrt 3 e^{ -\gamma(2\gamma_E-1)} ~\Gamma \left( 1-\gamma \right)   }{(1+\alpha)^{\frac 14}  \sqrt{2\gamma+3}~ \Gamma \left( 4+\gamma  \right)  }\\
&= 1-\frac 4 9 \alpha + \frac{791}{2592}\alpha^2 + \left( -\frac{5507}{23328}+ \frac{\zeta(3)}{324} \right) \alpha^3+\ldots, \nonumber \\
\delta &= 1- 2\gamma_E   +\frac{ \ln \frac{6 \sqrt 3  ~\Gamma \left( 1-\gamma \right)   }{(1+\alpha)^{\frac 14}  \sqrt{2\gamma+3}~ \Gamma \left( 4+\gamma  \right)  }}{\gamma}\\
&= -\frac 8 3 + \frac{61}{144} \alpha + \left( -\frac{10493}{46656} + \frac{\zeta(3)}{54} \right) \alpha^2 + \ldots .\nonumber 
\end{align}

\FloatBarrier

\subsection{$\phi^3$ null ladders}

In the present article, we restrict ourselves to Green's functions which depend on only one kinematic variable. For the vertex in $\phi^3$ theory, we achieve this by fixing one of the external momenta to zero, and inserting subdiagrams into the corresponding vertex as shown in \cref{fig:phi3_ladders}. 

\begin{figure}[htb]
	\centering
	\begin{tikzpicture}[scale=.7]
		\node at (-1.1,0){$1+ \alpha$};
		\node[vertex] (v1) at (0,0){};
		\node[vertex](v2) at (30:1){};
		\node[vertex](v3) at (-30:1){};
		
		\draw[edge] (v1) --  (v2);
		\draw[edge] (v1) --  (v3);
		\draw[edge] (v3) --  (v2);
		\draw[edge] (v1)-- +(-.3,0);
		\draw[edge ] (v2)-- +(.3,0);
		\draw[edge ] (v3)-- +(.3,0);
		
		\node at (2.,0){$+\alpha^2$};
		
		\node[vertex] (v1) at (3,0){};
		\node[vertex](v2) at ($(v1)+ (30:.8)$){};
		\node[vertex](v3) at ($(v1)+ (-30:.8)$){};
		\node[vertex](v4) at ($(v1)+ (30:1.6)$){};
		\node[vertex](v5) at ($(v1)+ (-30:1.6)$){};
		
		\draw[edge] (v1) --  (v2);
		\draw[edge] (v1) --  (v3);
		\draw[edge] (v3) --  (v2);
		\draw[edge] (v2) --  (v4);
		\draw[edge] (v3) --  (v5);
		\draw[edge] (v4) --  (v5);
		\draw[edge] (v1)-- +(-.3,0);
		\draw[edge ] (v4)-- +(.3,0);
		\draw[edge ] (v5)-- +(.3,0);
		
		\node at (5.5,0){$+\alpha^3$};
		
		\node[vertex] (v1) at (6.5,0){};
		\node[vertex](v2) at ($(v1)+ (30:.8)$){};
		\node[vertex](v3) at ($(v1)+ (-30:.8)$){};
		\node[vertex](v4) at ($(v1)+ (30:1.6)$){};
		\node[vertex](v5) at ($(v1)+ (-30:1.6)$){};
		\node[vertex](v6) at ($(v1)+ (30:2.4)$){};
		\node[vertex](v7) at ($(v1)+ (-30:2.4)$){};
		
		\draw[edge] (v1) --  (v2);
		\draw[edge] (v1) --  (v3);
		\draw[edge] (v3) --  (v2);
		\draw[edge] (v2) --  (v4);
		\draw[edge] (v3) --  (v5);
		\draw[edge] (v4) --  (v5);
		\draw[edge] (v4) --  (v6);
		\draw[edge] (v5) --  (v7);
		\draw[edge] (v6) --  (v7);
		\draw[edge] (v1)-- +(-.3,0);
		\draw[edge ] (v6)-- +(.3,0);
		\draw[edge ] (v7)-- +(.3,0);
		
		\node at (9.5,0){$+\ldots $};
		
	\end{tikzpicture}
	
	\caption{Sum of ladders in $\phi^3$ theory. }
	\label{fig:phi3_ladders}
\end{figure}
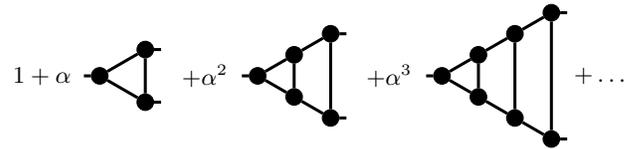

A vertex insertion with zero momentum is equivalent to what would be a mass insertion, that is, a 2-valent vertex that effectively squares the propagator it resides in, see \cref{fig:phi3_ladder_kernel}. The kernel  then amounts to a 1-loop multiedge with propagator powers 1 and $2-\rho$. We obtain its formula by replacing $\rho \mapsto \rho-1$ in \cref{mellin_multiedge} for $D_0=6$. Notice that the multiedge with one squared propagator is not infrared divergent in 6 dimensions. 

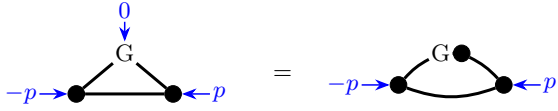
\begin{figure}[htbp]
	\centering
	\begin{tikzpicture}[scale=.7]
		 
		\node[minimum size=3mm,label={[label distance=2mm,blue]above:{$0$}}] (v1) at (0,0){};
		\node at (v1){G};
		\node[vertex,label={[label distance=3mm,blue]left:{$-p$}}](v2) at (220:1.2){};
		\node[vertex,label={[label distance=3mm,blue]right:{$p$}}](v3) at (320:1.2){};
		
		\draw[edge] (v1) --  (v2);
		\draw[edge] (v1) --  (v3);
		\draw[edge] (v3) --  (v2);
		\draw[momentumarrow,<-] (v1)-- +(0,.6);
		\draw[momentumarrow,<- ] (v2)-- +(-.7,0);
		\draw[momentumarrow,<- ] (v3)-- +(.7,0);
		
		\node[] at (3,-.4){$=$};
		
		\node[minimum size=3mm] (v1) at (6,-0){};
		\node at (v1){G};
		\node[vertex] (v4) at ($(v1)+(.4,0)$){};
		\node[vertex,label={[label distance=3mm,blue]left:{$-p$}}](v2) at ($(v1)+(-.8,-.6)$){};
		\node[vertex,label={[label distance=3mm,blue]right:{$p$}}](v3) at ($(v1)+(1.2,-.6)$){};
		
		\draw[edge] (v1) --  (v4);
		\draw[edge, bend angle=10,bend left] (v2) to  (v1);
		\draw[edge,bend angle=10,bend left] (v4) to  (v3);
		\draw[edge, bend angle=20,bend right] (v2) to (v3);
	 
		\draw[momentumarrow,<-] (v2)-- +(-.7,0);
		\draw[momentumarrow,<- ] (v3)-- +(.7,0);

	\end{tikzpicture}
	
	\caption{The triangle diagram with zero momentum transfer is equivalent to a multiedge. Unlike \cref{fig:B_phi3}, the inserted subdiagram does not cancel an adjacent edge, we indicate this by an extra dot.}
	\label{fig:phi3_ladder_kernel}
\end{figure}

The resulting Mellin transform is very similar to \cref{multiedge_D6_T}, namely
\begin{align*}
T_0 &= (\rho-1)\rho(\rho+1)(\rho+2),\\
T_1 &= 1-H_{1-\rho}-H_{2+\rho}.
\end{align*}
The anomalous dimension is 
\begin{align}
\gamma(\alpha) &= \frac{-1+\sqrt{5-4\sqrt{1+\alpha}}}{2}.
\end{align}
This, again, coincides with the MOM result of \cite{delbourgo_dimensional_1997}.
The remaining analysis proceeds as above, one finds
\begin{align}
		\bar \gamma_0 &= \frac{2 e^{-\gamma(2\gamma_E-1)}  \Gamma \left(2-\gamma   \right)  }{(1+\alpha)^{\frac 14}  \sqrt{ 2\gamma+1}~ \Gamma\left( 3+\gamma  \right)  }\\
		&= 1+\alpha+ \frac{39}{32}\alpha^2 + \left( \frac 5 3 - \frac{\zeta(3)}{12} \right) \alpha^3 + \ldots,\nonumber  \\
\delta &=  1-2\gamma_E +\frac{    \ln \left( \frac{2}{(1+\alpha)^{\frac 1 4} \sqrt{2\gamma+1}} \frac{\Gamma \left( 2-\gamma \right) }{\Gamma \left( 3+\gamma \right)  } \right)  }{\gamma}\\
&= \scalemath{.85}{-2-\frac{15}{16}\alpha + \left( \frac{\zeta(3)}{6}-\frac{37}{64} \right) \alpha^2 + \left( -\frac{453}{512}+\frac{\zeta(3)}{12} \right) \alpha^3 +\ldots }.\nonumber 
\end{align}

\subsection{$\phi^4$ rainbows}

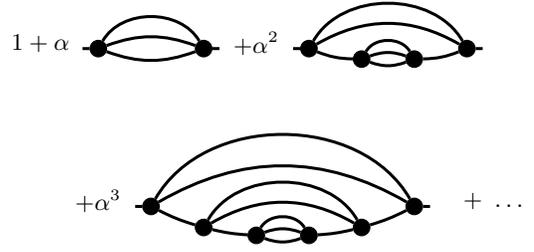
\begin{figure}[htbp]
	\centering
	\begin{tikzpicture}[scale=.7]
		\node at (-1.1, .1){$1 + \alpha$};
		\node[vertex] (v1) at (0,0){};
		\node[vertex](v2) at (2,0){};
		
		\draw[edge,  bend angle=60, bend left] (v1) to (v2);
		\draw[edge,  bend angle=20, bend left] (v1) to (v2);
		\draw[edge,  bend angle=20, bend right] (v1) to (v2);
		\draw[edge] (v1)-- +(-.3,0);
		\draw[edge ] (v2)-- +(.3,0);
		
		\node at (3,.1){$+\alpha^2$};
		
		\node[vertex](v1) at (4,0){};
		\node[vertex](v2) at (7,0){};
		\node[vertex](v3) at (5,-.2){};
		\node[vertex](v4) at (6,-.2){};
		
		\draw[edge,  bend angle=60, bend left] (v1) to (v2);
		\draw[edge,  bend angle=30, bend left] (v1) to (v2);
		\draw[edge,  bend angle=10, bend right] (v1) to (v3);
		\draw[edge,  bend angle=60, bend left] (v3) to (v4);
		\draw[edge,  bend angle=20, bend left] (v3) to (v4);
		\draw[edge,  bend angle=20, bend right] (v3) to (v4);
		\draw[edge,  bend angle=10, bend right] (v4) to (v2);
		\draw[edge] (v1)-- +(-.3,0);
		\draw[edge ] (v2)-- +(.3,0);
		
		\node at (0,-2.9){$+\alpha^3$};

		\node[vertex](v1) at (1,-3){};
		\node[vertex](v2) at (2,-3.4){};
		\node[vertex](v3) at (3,-3.55){};
		\node[vertex](v4) at (4,-3.55){};
		\node[vertex](v5) at (5,-3.4){};
		\node[vertex](v6) at (6,-3){};
		
		\draw[edge,  bend angle=60, bend left] (v1) to (v6);
		\draw[edge,  bend angle=30, bend left] (v1) to (v6);
		\draw[edge,  bend angle=5, bend right] (v1) to (v2);
		\draw[edge,  bend angle=60, bend left] (v2) to (v5);
		\draw[edge,  bend angle=30, bend left] (v2) to (v5);
		\draw[edge,  bend angle=5, bend right] (v2) to (v3);
		\draw[edge,  bend angle=20, bend right] (v3) to (v4);
		\draw[edge,  bend angle=60, bend left] (v3) to (v4);
		\draw[edge,  bend angle=20, bend left] (v3) to (v4);
		\draw[edge,  bend angle=5, bend right] (v4) to (v5);
		\draw[edge,  bend angle=5, bend right] (v5) to (v6);
		
		\draw[edge] (v1)-- +(-.3,0);
		\draw[edge ] (v6)-- +(.3,0);
		
		\node at (7.5,-2.9){$+~\ldots$};
	\end{tikzpicture}
	
	\caption{Rainbows in $\phi^4$ theory. The kernel is a 2-loop diagram, which is primitive since massless tadpoles vanish. }
	\label{fig:phi4_rainbows}
\end{figure}

The kernel diagram for the DSE may have arbitrary loop order, but it must not have subdivergences. For the propagator in $\phi^4$ theory, the leading diagram is a 2-loop multiedge (\enquote{sunrise}), whose 1-loop multiedge subdiagrams are UV-divergent. However, when the subdivergence is replaced by a counterterm, one obtains a 1-loop tadpole, which vanishes in a massless theory. Therefore, the 2-loop multiedge is indeed primitive and we can use our formalism for the version of rainbows shown in \cref{fig:phi4_rainbows}. 
The Mellin transform of the 2-loop multiedge is quite similar to that of the 1-loop multiedge in \cref{mellin_multiedge}, namely
\begin{align*}
	\frac{\Gamma \left( 3-\rho - D_0 + 2\epsilon \right)  \Gamma \left( \frac{D_0}{2}-\epsilon-1 \right)^2   \Gamma \left( \frac{D_0}{2}-\epsilon -1+\rho \right)    }{e^{-2\gamma_E \epsilon}\Gamma \left( \frac 3 2 D_0 - 3 \epsilon - 3 + \rho  \right) \Gamma \left( 1-\rho \right)   }.
\end{align*}
We are interested in $D_0=4$. Since this is a 2-loop diagram,   we now need $F(\epsilon, \rho-2\epsilon)$ to compute $T_1$ of \cref{mellin_T_series}, the result is
\begin{align*}
	T_0 & = \rho (\rho+1)^2 (\rho+2),   \\
	T_1 &= - \frac{1}{\rho+1} - \frac{1}{\rho+2} - 2H_{-\rho}-2H_\rho. 
\end{align*}
Notice that $T_0$ contains a squared factor, which was not the case in any other model considered. This difference leads to interesting consequences for a resurgence analysis of a non-linear  DSEs in \cite{borinsky_treetubings_2024}. In our case of a linear DSE, the anomalous dimension is
\begin{align}
	\gamma(\alpha) &= \sqrt{\frac{1+\sqrt{1+4\alpha}}{2}}-1.
\end{align}
Application of our formalism delivers
\begin{align*}
\bar \gamma_0 &= \scalemath{.87}{\frac{ 2^{\frac 74}\left( ( 1+4\alpha) ( 2(1+\gamma)^2 + 2 \alpha(1+(1+\gamma)^2))\right) ^{-\frac 14}   ~ \Gamma \left( 1-\gamma  \right) ^2}{  e^{4 \gamma \gamma_E} \sqrt{6+4\gamma+2(\gamma+1)^2}~\Gamma \left( 1+\gamma \right) ^2}}\\
&= \scalemath{.73}{1-2\alpha + \frac{175}{32}\alpha^2 + \left( -\frac{539}{32}+\frac{\zeta(3)}{6} \right) \alpha^3 +\left( \frac{113127}{2048}-\frac{23 \zeta(3)}{24} \right) \alpha^4+ \ldots}
\end{align*}
Unlike previous examples, this $\bar \gamma_0$ contains squares of Euler gamma functions. Taking the logarithm, as in all other cases, gives a closed formula for $\delta = \frac{\ln \bar \gamma_0}{\gamma}$ which can readily be obtained with a computer algebra system, its series starts with
\begin{align*}
\delta &= \scalemath{.82}{-4 + \frac{31}{16}\alpha + \left( -\frac{811}{192}+ \frac{\zeta(3)}{3} \right) \alpha^2 + \left( \frac{18071}{1536}- \frac{5\zeta(3)}{6} \right) \alpha^3+\ldots. }
\end{align*} 

\bigskip 

Our list of examples is not exhaustive, further single-kernel DSEs that can be solved with this formalism have appeared e.g. in  \cite{broadhurst_renormalization_1999}. We merely remark that certain models that immediately come to mind, such as null-boxes in $\phi^4$ theory, require extra care because of potential infrared divergences. We leave the treatment of IR-divergences for future work.

\begin{acknowledgments}
	
	I thank Karen Yeats for discussions and comments on the draft. 
	
	Parts of this work were funded through the Royal Society grant {URF{\textbackslash}R1{\textbackslash}201473}. Other parts were carried out while the author was affiliated with Humboldt-Universität zu Berlin. 
\end{acknowledgments}

\bibliography{ladders_rainbows.bib}

%apsrev4-2.bst 2019-01-14 (MD) hand-edited version of apsrev4-1.bst
%Control: key (0)
%Control: author (8) initials jnrlst
%Control: editor formatted (1) identically to author
%Control: production of article title (0) allowed
%Control: page (0) single
%Control: year (1) truncated
%Control: production of eprint (0) enabled
\begin{thebibliography}{46}%
\makeatletter
\providecommand \@ifxundefined [1]{%
 \@ifx{#1\undefined}
}%
\providecommand \@ifnum [1]{%
 \ifnum #1\expandafter \@firstoftwo
 \else \expandafter \@secondoftwo
 \fi
}%
\providecommand \@ifx [1]{%
 \ifx #1\expandafter \@firstoftwo
 \else \expandafter \@secondoftwo
 \fi
}%
\providecommand \natexlab [1]{#1}%
\providecommand \enquote  [1]{``#1''}%
\providecommand \bibnamefont  [1]{#1}%
\providecommand \bibfnamefont [1]{#1}%
\providecommand \citenamefont [1]{#1}%
\providecommand \href@noop [0]{\@secondoftwo}%
\providecommand \href [0]{\begingroup \@sanitize@url \@href}%
\providecommand \@href[1]{\@@startlink{#1}\@@href}%
\providecommand \@@href[1]{\endgroup#1\@@endlink}%
\providecommand \@sanitize@url [0]{\catcode `\\12\catcode `\$12\catcode
  `\&12\catcode `\#12\catcode `\^12\catcode `\_12\catcode `\%12\relax}%
\providecommand \@@startlink[1]{}%
\providecommand \@@endlink[0]{}%
\providecommand \url  [0]{\begingroup\@sanitize@url \@url }%
\providecommand \@url [1]{\endgroup\@href {#1}{\urlprefix }}%
\providecommand \urlprefix  [0]{URL }%
\providecommand \Eprint [0]{\href }%
\providecommand \doibase [0]{https://doi.org/}%
\providecommand \selectlanguage [0]{\@gobble}%
\providecommand \bibinfo  [0]{\@secondoftwo}%
\providecommand \bibfield  [0]{\@secondoftwo}%
\providecommand \translation [1]{[#1]}%
\providecommand \BibitemOpen [0]{}%
\providecommand \bibitemStop [0]{}%
\providecommand \bibitemNoStop [0]{.\EOS\space}%
\providecommand \EOS [0]{\spacefactor3000\relax}%
\providecommand \BibitemShut  [1]{\csname bibitem#1\endcsname}%
\let\auto@bib@innerbib\@empty
%</preamble>
\bibitem [{\citenamefont {Delbourgo}\ \emph {et~al.}(1996)\citenamefont
  {Delbourgo}, \citenamefont {Kalloniatis},\ and\ \citenamefont
  {Thompson}}]{delbourgo_dimensional_1996}%
  \BibitemOpen
  \bibfield  {author} {\bibinfo {author} {\bibfnamefont {R.}~\bibnamefont
  {Delbourgo}}, \bibinfo {author} {\bibfnamefont {A.~C.}\ \bibnamefont
  {Kalloniatis}},\ and\ \bibinfo {author} {\bibfnamefont {G.}~\bibnamefont
  {Thompson}},\ }\bibfield  {title} {\bibinfo {title} {Dimensional
  renormalization: {{Ladders}} and rainbows},\ }\href
  {https://doi.org/10.1103/PhysRevD.54.5373} {\bibfield  {journal} {\bibinfo
  {journal} {Physical Review D}\ }\textbf {\bibinfo {volume} {54}},\ \bibinfo
  {pages} {5373} (\bibinfo {year} {1996})}\BibitemShut {NoStop}%
\bibitem [{\citenamefont {Delbourgo}\ \emph {et~al.}(1997)\citenamefont
  {Delbourgo}, \citenamefont {Elliott},\ and\ \citenamefont
  {McAnally}}]{delbourgo_dimensional_1997}%
  \BibitemOpen
  \bibfield  {author} {\bibinfo {author} {\bibfnamefont {R.}~\bibnamefont
  {Delbourgo}}, \bibinfo {author} {\bibfnamefont {D.}~\bibnamefont {Elliott}},\
  and\ \bibinfo {author} {\bibfnamefont {D.~S.}\ \bibnamefont {McAnally}},\
  }\bibfield  {title} {\bibinfo {title} {Dimensional renormalization in
  $\varphi^3$ theory: {{Ladders}} and rainbows},\ }\href
  {https://doi.org/10.1103/PhysRevD.55.5230} {\bibfield  {journal} {\bibinfo
  {journal} {Physical Review D}\ }\textbf {\bibinfo {volume} {55}},\ \bibinfo
  {pages} {5230} (\bibinfo {year} {1997})}\BibitemShut {NoStop}%
\bibitem [{\citenamefont {Kreimer}(1999)}]{kreimer_overlapping_1999}%
  \BibitemOpen
  \bibfield  {author} {\bibinfo {author} {\bibfnamefont {D.}~\bibnamefont
  {Kreimer}},\ }\bibfield  {title} {\bibinfo {title} {On {{Overlapping
  Divergences}}},\ }\href {https://doi.org/10.1007/s002200050661} {\bibfield
  {journal} {\bibinfo  {journal} {Communications in Mathematical Physics}\
  }\textbf {\bibinfo {volume} {204}},\ \bibinfo {pages} {669} (\bibinfo {year}
  {1999})},\ \Eprint {https://arxiv.org/abs/hep-th/9810022}
  {arXiv:hep-th/9810022} \BibitemShut {NoStop}%
\bibitem [{\citenamefont {Connes}\ and\ \citenamefont
  {Kreimer}(2000)}]{connes_renormalization_2000}%
  \BibitemOpen
  \bibfield  {author} {\bibinfo {author} {\bibfnamefont {A.}~\bibnamefont
  {Connes}}\ and\ \bibinfo {author} {\bibfnamefont {D.}~\bibnamefont
  {Kreimer}},\ }\bibfield  {title} {\bibinfo {title} {Renormalization in
  quantum field theory and the {{Riemann-Hilbert}} problem {{I}}: The {{Hopf}}
  algebra structure of graphs and the main theorem},\ }\href
  {https://doi.org/10.1007/s002200050779} {\bibfield  {journal} {\bibinfo
  {journal} {Communications in Mathematical Physics}\ }\textbf {\bibinfo
  {volume} {210}},\ \bibinfo {pages} {249} (\bibinfo {year} {2000})},\ \Eprint
  {https://arxiv.org/abs/hep-th/9912092} {arXiv:hep-th/9912092} \BibitemShut
  {NoStop}%
\bibitem [{\citenamefont {Connes}\ and\ \citenamefont
  {Kreimer}(2001)}]{connes_renormalization_2001}%
  \BibitemOpen
  \bibfield  {author} {\bibinfo {author} {\bibfnamefont {A.}~\bibnamefont
  {Connes}}\ and\ \bibinfo {author} {\bibfnamefont {D.}~\bibnamefont
  {Kreimer}},\ }\bibfield  {title} {\bibinfo {title} {Renormalization in
  quantum field theory and the {{Riemann-Hilbert}} problem {{II}}: The
  beta-function, diffeomorphisms and the renormalization group},\ }\href
  {https://doi.org/10.1007/PL00005547} {\bibfield  {journal} {\bibinfo
  {journal} {Communications in Mathematical Physics}\ }\textbf {\bibinfo
  {volume} {216}},\ \bibinfo {pages} {215} (\bibinfo {year} {2001})},\ \Eprint
  {https://arxiv.org/abs/hep-th/0003188} {arXiv:hep-th/0003188} \BibitemShut
  {NoStop}%
\bibitem [{\citenamefont {Broadhurst}\ and\ \citenamefont
  {Kreimer}(1999)}]{broadhurst_renormalization_1999}%
  \BibitemOpen
  \bibfield  {author} {\bibinfo {author} {\bibfnamefont {D.~J.}\ \bibnamefont
  {Broadhurst}}\ and\ \bibinfo {author} {\bibfnamefont {D.}~\bibnamefont
  {Kreimer}},\ }\bibfield  {title} {\bibinfo {title} {Renormalization
  {{Automated}} by {{Hopf Algebra}}},\ }\href
  {https://doi.org/10.1006/jsco.1999.0283} {\bibfield  {journal} {\bibinfo
  {journal} {J. Symb. Comput.}\ }\textbf {\bibinfo {volume} {27}},\ \bibinfo
  {pages} {581} (\bibinfo {year} {1999})},\ \Eprint
  {https://arxiv.org/abs/hep-th/9810087} {arXiv:hep-th/9810087} \BibitemShut
  {NoStop}%
\bibitem [{\citenamefont {Broadhurst}\ and\ \citenamefont
  {Kreimer}(2001)}]{broadhurst_exact_2001}%
  \BibitemOpen
  \bibfield  {author} {\bibinfo {author} {\bibfnamefont {D.~J.}\ \bibnamefont
  {Broadhurst}}\ and\ \bibinfo {author} {\bibfnamefont {D.}~\bibnamefont
  {Kreimer}},\ }\bibfield  {title} {\bibinfo {title} {Exact solutions of
  {{Dyson-Schwinger}} equations for iterated one-loop integrals and
  propagator-coupling duality},\ }\href
  {https://doi.org/10.1016/S0550-3213(01)00071-2} {\bibfield  {journal}
  {\bibinfo  {journal} {Nuclear Physics B}\ }\textbf {\bibinfo {volume}
  {600}},\ \bibinfo {pages} {403} (\bibinfo {year} {2001})},\ \Eprint
  {https://arxiv.org/abs/hep-th/0012146} {arXiv:hep-th/0012146} \BibitemShut
  {NoStop}%
\bibitem [{\citenamefont {Kreimer}\ and\ \citenamefont
  {Yeats}(2006)}]{kreimer_etude_2006}%
  \BibitemOpen
  \bibfield  {author} {\bibinfo {author} {\bibfnamefont {D.}~\bibnamefont
  {Kreimer}}\ and\ \bibinfo {author} {\bibfnamefont {K.}~\bibnamefont
  {Yeats}},\ }\bibfield  {title} {\bibinfo {title} {An {{Etude}} in non-linear
  {{Dyson}}–{{Schwinger Equations}}},\ }\href
  {https://doi.org/10.1016/j.nuclphysbps.2006.09.036} {\bibfield  {journal}
  {\bibinfo  {journal} {Nuclear Physics B - Proceedings Supplements}\ }\bibinfo
  {series} {Proceedings of the 8th {{DESY Workshop}} on {{Elementary Particle
  Theory}}},\ \textbf {\bibinfo {volume} {160}},\ \bibinfo {pages} {116}
  (\bibinfo {year} {2006})}\BibitemShut {NoStop}%
\bibitem [{\citenamefont {Kreimer}(2008)}]{kreimer_etude_2008}%
  \BibitemOpen
  \bibfield  {author} {\bibinfo {author} {\bibfnamefont {D.}~\bibnamefont
  {Kreimer}},\ }\bibfield  {title} {\bibinfo {title} {Étude for linear
  {{Dyson}}–{{Schwinger Equations}}},\ }in\ \href@noop {} {\emph {\bibinfo
  {booktitle} {Traces in Number Theory, Geometry and Quantum Fields}}},\
  \bibinfo {series and number} {\bibinfo {series} {Aspects of {{Mathematics}}}\
  No.\ \bibinfo {number} {E38}}\ (\bibinfo  {publisher} {Vieweg Verlag},\
  \bibinfo {address} {Wiesbaden},\ \bibinfo {year} {2008})\ pp.\ \bibinfo
  {pages} {155--160}\BibitemShut {NoStop}%
\bibitem [{\citenamefont {Kreimer}\ and\ \citenamefont
  {Yeats}(2008)}]{kreimer_recursion_2008}%
  \BibitemOpen
  \bibfield  {author} {\bibinfo {author} {\bibfnamefont {D.}~\bibnamefont
  {Kreimer}}\ and\ \bibinfo {author} {\bibfnamefont {K.}~\bibnamefont
  {Yeats}},\ }\bibfield  {title} {\bibinfo {title} {Recursion and {{Growth
  Estimates}} in {{Renormalizable Quantum Field Theory}}},\ }\href
  {https://doi.org/10.1007/s00220-008-0431-7} {\bibfield  {journal} {\bibinfo
  {journal} {Communications in Mathematical Physics}\ }\textbf {\bibinfo
  {volume} {279}},\ \bibinfo {pages} {401} (\bibinfo {year} {2008})},\ \Eprint
  {https://arxiv.org/abs/hep-th/0612179} {arXiv:hep-th/0612179} \BibitemShut
  {NoStop}%
\bibitem [{\citenamefont {Yeats}(2008)}]{yeats_growth_2008}%
  \BibitemOpen
  \bibfield  {author} {\bibinfo {author} {\bibfnamefont {K.}~\bibnamefont
  {Yeats}},\ }\emph {\bibinfo {title} {Growth Estimates for {{Dyson-Schwinger}}
  Equations}},\ \href@noop {} {Ph.D. thesis},\ \bibinfo  {school} {Boston
  University}, \bibinfo {address} {Boston} (\bibinfo {year} {2008}),\ \Eprint
  {https://arxiv.org/abs/0810.2249} {arXiv:0810.2249 [math-ph]} \BibitemShut
  {NoStop}%
\bibitem [{\citenamefont {Balduf}(2023{\natexlab{a}})}]{balduf_dyson_2023}%
  \BibitemOpen
  \bibfield  {author} {\bibinfo {author} {\bibfnamefont {P.-H.}\ \bibnamefont
  {Balduf}},\ }\bibfield  {title} {\bibinfo {title} {Dyson–{{Schwinger}}
  equations in minimal subtraction},\ }\href
  {https://doi.org/10.4171/aihpd/169} {\bibfield  {journal} {\bibinfo
  {journal} {Annales de l’Institut Henri Poincaré D}\ }\textbf {\bibinfo
  {volume} {12}},\ \bibinfo {pages} {1} (\bibinfo {year}
  {2023}{\natexlab{a}})},\ \Eprint {https://arxiv.org/abs/2109.13684}
  {arXiv:2109.13684 [hep-th]} \BibitemShut {NoStop}%
\bibitem [{\citenamefont {Kreimer}(2006)}]{kreimer_anatomy_2006}%
  \BibitemOpen
  \bibfield  {author} {\bibinfo {author} {\bibfnamefont {D.}~\bibnamefont
  {Kreimer}},\ }\bibfield  {title} {\bibinfo {title} {Anatomy of a gauge
  theory},\ }\href {https://doi.org/10.1016/j.aop.2006.01.004} {\bibfield
  {journal} {\bibinfo  {journal} {Annals of Physics}\ }\textbf {\bibinfo
  {volume} {321}},\ \bibinfo {pages} {2757} (\bibinfo {year} {2006})},\ \Eprint
  {https://arxiv.org/abs/hep-th/0509135} {arXiv:hep-th/0509135} \BibitemShut
  {NoStop}%
\bibitem [{\citenamefont {Foissy}(2008)}]{foissy_faa_2008}%
  \BibitemOpen
  \bibfield  {author} {\bibinfo {author} {\bibfnamefont {L.}~\bibnamefont
  {Foissy}},\ }\bibfield  {title} {\bibinfo {title} {Faà di {{Bruno}}
  subalgebras of the {{Hopf}} algebra of planar trees from combinatorial
  {{Dyson}}–{{Schwinger}} equations},\ }\href
  {https://doi.org/10.1016/j.aim.2007.12.003} {\bibfield  {journal} {\bibinfo
  {journal} {Advances in Mathematics}\ }\textbf {\bibinfo {volume} {218}},\
  \bibinfo {pages} {136} (\bibinfo {year} {2008})},\ \Eprint
  {https://arxiv.org/abs/0707.1204} {arXiv:0707.1204} \BibitemShut {NoStop}%
\bibitem [{\citenamefont {Foissy}(2010)}]{foissy_classification_2010}%
  \BibitemOpen
  \bibfield  {author} {\bibinfo {author} {\bibfnamefont {L.}~\bibnamefont
  {Foissy}},\ }\bibfield  {title} {\bibinfo {title} {Classification of systems
  of {{Dyson}}–{{Schwinger}} equations in the {{Hopf}} algebra of decorated
  rooted trees},\ }\href {https://doi.org/10.1016/j.aim.2010.01.024} {\bibfield
   {journal} {\bibinfo  {journal} {Advances in Mathematics}\ }\textbf {\bibinfo
  {volume} {224}},\ \bibinfo {pages} {2094} (\bibinfo {year}
  {2010})}\BibitemShut {NoStop}%
\bibitem [{\citenamefont {Broadhurst}\ and\ \citenamefont
  {Kreimer}(2000)}]{broadhurst_combinatoric_2000}%
  \BibitemOpen
  \bibfield  {author} {\bibinfo {author} {\bibfnamefont {D.~J.}\ \bibnamefont
  {Broadhurst}}\ and\ \bibinfo {author} {\bibfnamefont {D.}~\bibnamefont
  {Kreimer}},\ }\bibfield  {title} {\bibinfo {title} {Combinatoric explosion of
  renormalization tamed by {{Hopf}} algebra: 30-loop {{Pade-Borel}}
  resummation},\ }\href {https://doi.org/10.1016/S0370-2693(00)00051-4}
  {\bibfield  {journal} {\bibinfo  {journal} {Physics Letters B}\ }\textbf
  {\bibinfo {volume} {475}},\ \bibinfo {pages} {63} (\bibinfo {year} {2000})},\
  \Eprint {https://arxiv.org/abs/hep-th/9912093} {arXiv:hep-th/9912093}
  \BibitemShut {NoStop}%
\bibitem [{\citenamefont {Balduf}(2024)}]{balduf_dyson_2024}%
  \BibitemOpen
  \bibfield  {author} {\bibinfo {author} {\bibfnamefont {P.-H.}\ \bibnamefont
  {Balduf}},\ }\href {https://doi.org/10.1007/978-3-031-54446-0} {\emph
  {\bibinfo {title} {Dyson–{{Schwinger Equations}}, {{Renormalization
  Conditions}}, and the {{Hopf Algebra}} of {{Perturbative Quantum Field
  Theory}}}}},\ Springer {{Theses}}\ (\bibinfo  {publisher} {Springer Nature
  Switzerland},\ \bibinfo {address} {Cham},\ \bibinfo {year}
  {2024})\BibitemShut {NoStop}%
\bibitem [{\citenamefont {Yeats}(2017)}]{yeats_combinatorial_2017}%
  \BibitemOpen
  \bibfield  {author} {\bibinfo {author} {\bibfnamefont {K.}~\bibnamefont
  {Yeats}},\ }\href {https://doi.org/10.1007/978-3-319-47551-6} {\emph
  {\bibinfo {title} {A {{Combinatorial Perspective}} on {{Quantum Field
  Theory}}}}},\ \bibinfo {series} {{{SpringerBriefs}} in {{Mathematical
  Physics}}}, Vol.~\bibinfo {volume} {15}\ (\bibinfo  {publisher} {Springer
  International Publishing},\ \bibinfo {address} {Cham},\ \bibinfo {year}
  {2017})\BibitemShut {NoStop}%
\bibitem [{\citenamefont {Panzer}(2012)}]{panzer_hopf_2012}%
  \BibitemOpen
  \bibfield  {author} {\bibinfo {author} {\bibfnamefont {E.}~\bibnamefont
  {Panzer}},\ }\emph {\bibinfo {title} {Hopf Algebraic {{Renormalization}} of
  {{Kreimer}}'s Toy Model}},\ \href@noop {} {Master's thesis},\ \bibinfo
  {school} {Humboldt-Universität zu Berlin}, \bibinfo {address} {Berlin}
  (\bibinfo {year} {2012}),\ \Eprint {https://arxiv.org/abs/1202.3552}
  {arXiv:1202.3552 [math.QA]} \BibitemShut {NoStop}%
\bibitem [{\citenamefont
  {{Olson-Harris}}(2024)}]{olson-harris_applications_2024}%
  \BibitemOpen
  \bibfield  {author} {\bibinfo {author} {\bibfnamefont {N.}~\bibnamefont
  {{Olson-Harris}}},\ }\emph {\bibinfo {title} {Some {{Applications}} of
  {{Combinatorial Hopf Algebras}} to {{Integro-Differential Equations}} and
  {{Symmetric Function Identities}}}},\ \href@noop {} {Ph.D. thesis},\ \bibinfo
   {school} {University of Waterloo}, \bibinfo {address} {Waterloo, Ontario,
  Canada} (\bibinfo {year} {2024})\BibitemShut {NoStop}%
\bibitem [{\citenamefont {Bloch}\ \emph {et~al.}(2006)\citenamefont {Bloch},
  \citenamefont {Esnault},\ and\ \citenamefont {Kreimer}}]{bloch_motives_2006}%
  \BibitemOpen
  \bibfield  {author} {\bibinfo {author} {\bibfnamefont {S.}~\bibnamefont
  {Bloch}}, \bibinfo {author} {\bibfnamefont {H.}~\bibnamefont {Esnault}},\
  and\ \bibinfo {author} {\bibfnamefont {D.}~\bibnamefont {Kreimer}},\
  }\bibfield  {title} {\bibinfo {title} {On {{Motives Associated}} to {{Graph
  Polynomials}}},\ }\href {https://doi.org/10.1007/s00220-006-0040-2}
  {\bibfield  {journal} {\bibinfo  {journal} {Communications in Mathematical
  Physics}\ }\textbf {\bibinfo {volume} {267}},\ \bibinfo {pages} {181}
  (\bibinfo {year} {2006})},\ \Eprint {https://arxiv.org/abs/math/0510011}
  {arXiv:math/0510011} \BibitemShut {NoStop}%
\bibitem [{\citenamefont {Schnetz}(2010)}]{schnetz_quantum_2010}%
  \BibitemOpen
  \bibfield  {author} {\bibinfo {author} {\bibfnamefont {O.}~\bibnamefont
  {Schnetz}},\ }\bibfield  {title} {\bibinfo {title} {Quantum periods: {{A}}
  census of $\phi^4$-transcendentals},\ }\href
  {https://doi.org/10.4310/CNTP.2010.v4.n1.a1} {\bibfield  {journal} {\bibinfo
  {journal} {Communications in Number Theory and Physics}\ }\textbf {\bibinfo
  {volume} {4}},\ \bibinfo {pages} {1} (\bibinfo {year} {2010})},\ \Eprint
  {https://arxiv.org/abs/0801.2856} {arXiv:0801.2856 [hep-th]} \BibitemShut
  {NoStop}%
\bibitem [{\citenamefont
  {Balduf}(2023{\natexlab{b}})}]{balduf_statistics_2023}%
  \BibitemOpen
  \bibfield  {author} {\bibinfo {author} {\bibfnamefont {P.-H.}\ \bibnamefont
  {Balduf}},\ }\bibfield  {title} {\bibinfo {title} {Statistics of {{Feynman}}
  amplitudes in $\phi^4$-theory},\ }\href
  {https://doi.org/10.1007/JHEP11(2023)160} {\bibfield  {journal} {\bibinfo
  {journal} {Journal of High Energy Physics}\ }\textbf {\bibinfo {volume}
  {2023.11}},\ \bibinfo {pages} {160} (\bibinfo {year} {2023}{\natexlab{b}})},\
  \Eprint {https://arxiv.org/abs/2305.13506} {arXiv:2305.13506 [hep-th]}
  \BibitemShut {NoStop}%
\bibitem [{\citenamefont {Panzer}(2015)}]{panzer_renormalization_2015}%
  \BibitemOpen
  \bibfield  {author} {\bibinfo {author} {\bibfnamefont {E.}~\bibnamefont
  {Panzer}},\ }\bibfield  {title} {\bibinfo {title} {Renormalization, {{Hopf}}
  algebras and {{Mellin}} transforms},\ }in\ \href@noop {} {\emph {\bibinfo
  {booktitle} {Feynman {{Amplitudes}}, {{Periods}} and {{Motives}}}}},\
  \bibinfo {series} {Contemporary {{Mathematics}}}, Vol.\ \bibinfo {volume}
  {648}\ (\bibinfo  {publisher} {American Mathematical Society},\ \bibinfo
  {year} {2015})\ pp.\ \bibinfo {pages} {169--202},\ \Eprint
  {https://arxiv.org/abs/1407.4943} {arXiv:1407.4943} \BibitemShut {NoStop}%
\bibitem [{\citenamefont {Callan}(1970)}]{callan_broken_1970}%
  \BibitemOpen
  \bibfield  {author} {\bibinfo {author} {\bibfnamefont {C.~G.}\ \bibnamefont
  {Callan}},\ }\bibfield  {title} {\bibinfo {title} {Broken {{Scale
  Invariance}} in {{Scalar Field Theory}}},\ }\href
  {https://doi.org/10.1103/PhysRevD.2.1541} {\bibfield  {journal} {\bibinfo
  {journal} {Physical Review D}\ }\textbf {\bibinfo {volume} {2}},\ \bibinfo
  {pages} {1541} (\bibinfo {year} {1970})}\BibitemShut {NoStop}%
\bibitem [{\citenamefont {Symanzik}(1970)}]{symanzik_small_1970}%
  \BibitemOpen
  \bibfield  {author} {\bibinfo {author} {\bibfnamefont {K.}~\bibnamefont
  {Symanzik}},\ }\bibfield  {title} {\bibinfo {title} {Small distance behaviour
  in field theory and power counting},\ }\href
  {https://doi.org/10.1007/BF01649434} {\bibfield  {journal} {\bibinfo
  {journal} {Communications in Mathematical Physics}\ }\textbf {\bibinfo
  {volume} {18}},\ \bibinfo {pages} {227} (\bibinfo {year} {1970})}\BibitemShut
  {NoStop}%
\bibitem [{\citenamefont {Bellon}(2010{\natexlab{a}})}]{bellon_efficient_2010}%
  \BibitemOpen
  \bibfield  {author} {\bibinfo {author} {\bibfnamefont {M.~P.}\ \bibnamefont
  {Bellon}},\ }\bibfield  {title} {\bibinfo {title} {An {{Efficient Method}}
  for the {{Solution}} of {{Schwinger--Dyson}} equations for propagators},\
  }\href {https://doi.org/10.1007/s11005-010-0415-3} {\bibfield  {journal}
  {\bibinfo  {journal} {Letters in Mathematical Physics}\ }\textbf {\bibinfo
  {volume} {94}},\ \bibinfo {pages} {77} (\bibinfo {year}
  {2010}{\natexlab{a}})},\ \Eprint {https://arxiv.org/abs/1005.0196}
  {arXiv:1005.0196 [hep-th]} \BibitemShut {NoStop}%
\bibitem [{\citenamefont {Bellon}\ and\ \citenamefont
  {Schaposnik}(2008)}]{bellon_renormalization_2008}%
  \BibitemOpen
  \bibfield  {author} {\bibinfo {author} {\bibfnamefont {M.~P.}\ \bibnamefont
  {Bellon}}\ and\ \bibinfo {author} {\bibfnamefont {F.~A.}\ \bibnamefont
  {Schaposnik}},\ }\bibfield  {title} {\bibinfo {title} {Renormalization group
  functions for the {{Wess-Zumino}} model: Up to 200 loops through {{Hopf}}
  algebras},\ }\href {https://doi.org/10.1016/j.nuclphysb.2008.02.005}
  {\bibfield  {journal} {\bibinfo  {journal} {Nuclear Physics B}\ }\textbf
  {\bibinfo {volume} {800}},\ \bibinfo {pages} {517} (\bibinfo {year}
  {2008})},\ \Eprint {https://arxiv.org/abs/0801.0727} {arXiv:0801.0727}
  \BibitemShut {NoStop}%
\bibitem [{\citenamefont
  {Bellon}(2010{\natexlab{b}})}]{bellon_approximate_2010}%
  \BibitemOpen
  \bibfield  {author} {\bibinfo {author} {\bibfnamefont {M.~P.}\ \bibnamefont
  {Bellon}},\ }\bibfield  {title} {\bibinfo {title} {Approximate differential
  equations for renormalization group functions in models free of vertex
  divergencies},\ }\href {https://doi.org/10.1016/j.nuclphysb.2009.11.002}
  {\bibfield  {journal} {\bibinfo  {journal} {Nuclear Physics B}\ }\textbf
  {\bibinfo {volume} {826}},\ \bibinfo {pages} {522} (\bibinfo {year}
  {2010}{\natexlab{b}})}\BibitemShut {NoStop}%
\bibitem [{\citenamefont {Bellon}\ and\ \citenamefont
  {Clavier}(2017)}]{bellon_alien_2017}%
  \BibitemOpen
  \bibfield  {author} {\bibinfo {author} {\bibfnamefont {M.~P.}\ \bibnamefont
  {Bellon}}\ and\ \bibinfo {author} {\bibfnamefont {P.~J.}\ \bibnamefont
  {Clavier}},\ }\bibfield  {title} {\bibinfo {title} {Alien calculus and a
  {{Schwinger}}–{{Dyson}} equation: Two-point function with a nonperturbative
  mass scale},\ }\href {https://doi.org/10.1007/s11005-017-1016-1} {\bibfield
  {journal} {\bibinfo  {journal} {Letters in Mathematical Physics}\ }\textbf
  {\bibinfo {volume} {108}},\ \bibinfo {pages} {391} (\bibinfo {year}
  {2017})},\ \Eprint {https://arxiv.org/abs/1612.07813} {arXiv:1612.07813
  [hep-th]} \BibitemShut {NoStop}%
\bibitem [{\citenamefont {Bellon}\ and\ \citenamefont
  {Clavier}(2015)}]{bellon_schwinger_2015}%
  \BibitemOpen
  \bibfield  {author} {\bibinfo {author} {\bibfnamefont {M.~P.}\ \bibnamefont
  {Bellon}}\ and\ \bibinfo {author} {\bibfnamefont {P.~J.}\ \bibnamefont
  {Clavier}},\ }\bibfield  {title} {\bibinfo {title} {A {{Schwinger}}–{{Dyson
  Equation}} in the {{Borel Plane}}: {{Singularities}} of the {{Solution}}},\
  }\href {https://doi.org/10.1007/s11005-015-0761-2} {\bibfield  {journal}
  {\bibinfo  {journal} {Letters in Mathematical Physics}\ }\textbf {\bibinfo
  {volume} {105}},\ \bibinfo {pages} {795} (\bibinfo {year} {2015})},\ \Eprint
  {https://arxiv.org/abs/1411.7190} {arXiv:1411.7190 [math-ph]} \BibitemShut
  {NoStop}%
\bibitem [{\citenamefont {Borinsky}\ and\ \citenamefont
  {Dunne}(2020)}]{borinsky_nonperturbative_2020}%
  \BibitemOpen
  \bibfield  {author} {\bibinfo {author} {\bibfnamefont {M.}~\bibnamefont
  {Borinsky}}\ and\ \bibinfo {author} {\bibfnamefont {G.~V.}\ \bibnamefont
  {Dunne}},\ }\bibfield  {title} {\bibinfo {title} {Non-perturbative completion
  of {{Hopf-algebraic Dyson-Schwinger}} equations},\ }\href
  {https://doi.org/10.1016/j.nuclphysb.2020.115096} {\bibfield  {journal}
  {\bibinfo  {journal} {Nuclear Physics B}\ }\textbf {\bibinfo {volume}
  {957}},\ \bibinfo {pages} {115096} (\bibinfo {year} {2020})},\ \Eprint
  {https://arxiv.org/abs/2005.04265} {arXiv:2005.04265 [hep-th]} \BibitemShut
  {NoStop}%
\bibitem [{\citenamefont {Borinsky}\ \emph {et~al.}(2021)\citenamefont
  {Borinsky}, \citenamefont {Dunne},\ and\ \citenamefont
  {Meynig}}]{borinsky_semiclassical_2021}%
  \BibitemOpen
  \bibfield  {author} {\bibinfo {author} {\bibfnamefont {M.}~\bibnamefont
  {Borinsky}}, \bibinfo {author} {\bibfnamefont {G.}~\bibnamefont {Dunne}},\
  and\ \bibinfo {author} {\bibfnamefont {M.}~\bibnamefont {Meynig}},\
  }\bibfield  {title} {\bibinfo {title} {Semiclassical {{Trans-Series}} from
  the {{Perturbative Hopf-Algebraic Dyson-Schwinger Equations}}: $\phi^3$
  {{QFT}} in 6 {{Dimensions}}},\ }\href
  {https://doi.org/10.3842/SIGMA.2021.087} {\bibfield  {journal} {\bibinfo
  {journal} {Symmetry, Integrability and Geometry: Methods and Applications}\
  }\textbf {\bibinfo {volume} {17}},\ \bibinfo {pages} {087} (\bibinfo {year}
  {2021})},\ \Eprint {https://arxiv.org/abs/2104.00593} {arXiv:2104.00593
  [hep-th]} \BibitemShut {NoStop}%
\bibitem [{\citenamefont {Bellon}\ and\ \citenamefont
  {Russo}(2021{\natexlab{a}})}]{bellon_resurgent_2021}%
  \BibitemOpen
  \bibfield  {author} {\bibinfo {author} {\bibfnamefont {M.~P.}\ \bibnamefont
  {Bellon}}\ and\ \bibinfo {author} {\bibfnamefont {E.~I.}\ \bibnamefont
  {Russo}},\ }\bibfield  {title} {\bibinfo {title} {Resurgent analysis of
  {{Ward-Schwinger-Dyson}} equations},\ }\href
  {https://doi.org/10.3842/SIGMA.2021.075} {\bibfield  {journal} {\bibinfo
  {journal} {SIGMA. Symmetry, Integrability and Geometry: Methods and
  Applications}\ }\textbf {\bibinfo {volume} {17}},\ \bibinfo {pages} {075}
  (\bibinfo {year} {2021}{\natexlab{a}})},\ \Eprint
  {https://arxiv.org/abs/2011.13822} {arXiv:2011.13822 [hep-th]} \BibitemShut
  {NoStop}%
\bibitem [{\citenamefont {Bellon}\ and\ \citenamefont
  {Russo}(2021{\natexlab{b}})}]{bellon_ward_2021}%
  \BibitemOpen
  \bibfield  {author} {\bibinfo {author} {\bibfnamefont {M.~P.}\ \bibnamefont
  {Bellon}}\ and\ \bibinfo {author} {\bibfnamefont {E.~I.}\ \bibnamefont
  {Russo}},\ }\bibfield  {title} {\bibinfo {title}
  {Ward–{{Schwinger}}–{{Dyson}} equations in $\phi^3_6$ quantum field
  theory},\ }\href {https://doi.org/10.1007/s11005-021-01377-2} {\bibfield
  {journal} {\bibinfo  {journal} {Letters in Mathematical Physics}\ }\textbf
  {\bibinfo {volume} {111}},\ \bibinfo {pages} {42} (\bibinfo {year}
  {2021}{\natexlab{b}})}\BibitemShut {NoStop}%
\bibitem [{\citenamefont {Borinsky}\ and\ \citenamefont
  {Broadhurst}(2022)}]{borinsky_taming_2022}%
  \BibitemOpen
  \bibfield  {author} {\bibinfo {author} {\bibfnamefont {M.}~\bibnamefont
  {Borinsky}}\ and\ \bibinfo {author} {\bibfnamefont {D.}~\bibnamefont
  {Broadhurst}},\ }\bibfield  {title} {\bibinfo {title} {Taming a resurgent
  ultra-violet renormalon},\ }\href {https://doi.org/10.22323/1.416.0013}
  {\bibfield  {journal} {\bibinfo  {journal} {PoS}\ }\textbf {\bibinfo {volume}
  {LL2022}},\ \bibinfo {pages} {013} (\bibinfo {year} {2022})},\ \Eprint
  {https://arxiv.org/abs/2209.10586} {arXiv:2209.10586 [hep-th]} \BibitemShut
  {NoStop}%
\bibitem [{\citenamefont {Borinsky}\ \emph {et~al.}(2024)\citenamefont
  {Borinsky}, \citenamefont {Dunne},\ and\ \citenamefont
  {Yeats}}]{borinsky_treetubings_2024}%
  \BibitemOpen
  \bibfield  {author} {\bibinfo {author} {\bibfnamefont {M.}~\bibnamefont
  {Borinsky}}, \bibinfo {author} {\bibfnamefont {G.~V.}\ \bibnamefont
  {Dunne}},\ and\ \bibinfo {author} {\bibfnamefont {K.}~\bibnamefont {Yeats}},\
  }\href {https://doi.org/10.48550/arXiv.2408.15883} {\bibinfo {title}
  {Tree-tubings and the combinatorics of resurgent {{Dyson-Schwinger}}
  equations}} (\bibinfo {year} {2024}),\ \Eprint
  {https://arxiv.org/abs/2408.15883} {arXiv:2408.15883 [math-ph]} \BibitemShut
  {NoStop}%
\bibitem [{\citenamefont {Marie}\ and\ \citenamefont
  {Yeats}(2013)}]{marie_chord_2013}%
  \BibitemOpen
  \bibfield  {author} {\bibinfo {author} {\bibfnamefont {N.}~\bibnamefont
  {Marie}}\ and\ \bibinfo {author} {\bibfnamefont {K.}~\bibnamefont {Yeats}},\
  }\bibfield  {title} {\bibinfo {title} {A chord diagram expansion coming from
  some {{Dyson-Schwinger}} equations},\ }\href
  {https://doi.org/10.4310/CNTP.2013.v7.n2.a2} {\bibfield  {journal} {\bibinfo
  {journal} {Communications in Number Theory and Physics}\ }\textbf {\bibinfo
  {volume} {07}},\ \bibinfo {pages} {251} (\bibinfo {year} {2013})},\ \Eprint
  {https://arxiv.org/abs/1210.5457} {arXiv:1210.5457 [math.CO]} \BibitemShut
  {NoStop}%
\bibitem [{\citenamefont {Courtiel}\ and\ \citenamefont
  {Yeats}(2020)}]{courtiel_nexttok_2020}%
  \BibitemOpen
  \bibfield  {author} {\bibinfo {author} {\bibfnamefont {J.}~\bibnamefont
  {Courtiel}}\ and\ \bibinfo {author} {\bibfnamefont {K.}~\bibnamefont
  {Yeats}},\ }\bibfield  {title} {\bibinfo {title} {Next-to-k {{Leading Log
  Expansions}} by {{Chord Diagrams}}},\ }\href
  {https://doi.org/10.1007/s00220-020-03691-7} {\bibfield  {journal} {\bibinfo
  {journal} {Communications in Mathematical Physics}\ }\textbf {\bibinfo
  {volume} {377}},\ \bibinfo {pages} {469} (\bibinfo {year} {2020})},\ \Eprint
  {https://arxiv.org/abs/1906.05139} {arXiv:1906.05139 [math-ph]} \BibitemShut
  {NoStop}%
\bibitem [{\citenamefont {Hihn}\ and\ \citenamefont
  {Yeats}(2019)}]{hihn_generalized_2019}%
  \BibitemOpen
  \bibfield  {author} {\bibinfo {author} {\bibfnamefont {M.}~\bibnamefont
  {Hihn}}\ and\ \bibinfo {author} {\bibfnamefont {K.}~\bibnamefont {Yeats}},\
  }\bibfield  {title} {\bibinfo {title} {Generalized chord diagram expansions
  of {{Dyson-Schwinger}} equations},\ }\href {https://doi.org/10.4171/aihpd/79}
  {\bibfield  {journal} {\bibinfo  {journal} {Annales de l’Institut Henri
  Poincaré D}\ }\textbf {\bibinfo {volume} {6}},\ \bibinfo {pages} {573}
  (\bibinfo {year} {2019})}\BibitemShut {NoStop}%
\bibitem [{\citenamefont {Courtiel}\ and\ \citenamefont
  {Yeats}(2017)}]{courtiel_terminal_2017}%
  \BibitemOpen
  \bibfield  {author} {\bibinfo {author} {\bibfnamefont {J.}~\bibnamefont
  {Courtiel}}\ and\ \bibinfo {author} {\bibfnamefont {K.}~\bibnamefont
  {Yeats}},\ }\bibfield  {title} {\bibinfo {title} {Terminal chords in
  connected chord diagrams},\ }\href {https://doi.org/10.4171/aihpd/44}
  {\bibfield  {journal} {\bibinfo  {journal} {Ann. Inst. H. Poincare D Comb.
  Phys. Interact.}\ }\textbf {\bibinfo {volume} {4}},\ \bibinfo {pages} {417}
  (\bibinfo {year} {2017})},\ \Eprint {https://arxiv.org/abs/1603.08596}
  {arXiv:1603.08596} \BibitemShut {NoStop}%
\bibitem [{\citenamefont {Balduf}\ \emph {et~al.}(2024)\citenamefont {Balduf},
  \citenamefont {Cantwell}, \citenamefont {{Ebrahimi-Fard}}, \citenamefont
  {Nabergall}, \citenamefont {{Olson-Harris}},\ and\ \citenamefont
  {Yeats}}]{balduf_tubings_2024}%
  \BibitemOpen
  \bibfield  {author} {\bibinfo {author} {\bibfnamefont {P.-H.}\ \bibnamefont
  {Balduf}}, \bibinfo {author} {\bibfnamefont {A.}~\bibnamefont {Cantwell}},
  \bibinfo {author} {\bibfnamefont {K.}~\bibnamefont {{Ebrahimi-Fard}}},
  \bibinfo {author} {\bibfnamefont {L.}~\bibnamefont {Nabergall}}, \bibinfo
  {author} {\bibfnamefont {N.}~\bibnamefont {{Olson-Harris}}},\ and\ \bibinfo
  {author} {\bibfnamefont {K.}~\bibnamefont {Yeats}},\ }\bibfield  {title}
  {\bibinfo {title} {Tubings, chord diagrams, and {{Dyson}}–{{Schwinger}}
  equations},\ }\href {https://doi.org/10.1112/jlms.70006} {\bibfield
  {journal} {\bibinfo  {journal} {Journal of the London Mathematical Society}\
  }\textbf {\bibinfo {volume} {110}},\ \bibinfo {pages} {e70006} (\bibinfo
  {year} {2024})},\ \Eprint {https://arxiv.org/abs/2302.02019}
  {arXiv:2302.02019 [math.CO]} \BibitemShut {NoStop}%
\bibitem [{\citenamefont {{Olson-Harris}}\ and\ \citenamefont
  {Yeats}(2025)}]{olson-harris_algebraic_2025}%
  \BibitemOpen
  \bibfield  {author} {\bibinfo {author} {\bibfnamefont {N.}~\bibnamefont
  {{Olson-Harris}}}\ and\ \bibinfo {author} {\bibfnamefont {K.}~\bibnamefont
  {Yeats}},\ }\href {https://doi.org/10.48550/arXiv.2501.12350} {\bibinfo
  {title} {The algebraic structure of {{Dyson--Schwinger}} equations with
  multiple insertion places}} (\bibinfo {year} {2025}),\ \Eprint
  {https://arxiv.org/abs/2501.12350} {arXiv:2501.12350 [math.CO]} \BibitemShut
  {NoStop}%
\bibitem [{\citenamefont {{'t Hooft}}(1973)}]{thooft_dimensional_1973}%
  \BibitemOpen
  \bibfield  {author} {\bibinfo {author} {\bibfnamefont {G.}~\bibnamefont {{'t
  Hooft}}},\ }\bibfield  {title} {\bibinfo {title} {Dimensional regularization
  and the renormalization group},\ }\href
  {https://doi.org/10.1016/0550-3213(73)90376-3} {\bibfield  {journal}
  {\bibinfo  {journal} {Nuclear Physics B}\ }\textbf {\bibinfo {volume} {61}},\
  \bibinfo {pages} {455} (\bibinfo {year} {1973})}\BibitemShut {NoStop}%
\bibitem [{\citenamefont {Gross}(1981)}]{gross_applications_1981}%
  \BibitemOpen
  \bibfield  {author} {\bibinfo {author} {\bibfnamefont {D.~J.}\ \bibnamefont
  {Gross}},\ }\bibfield  {title} {\bibinfo {title} {Applications of the
  renormalization group to high-energy physics},\ }in\ \href@noop {} {\emph
  {\bibinfo {booktitle} {Methods in {{Field Theory}}, {{Les Houches Session}}
  1975}}},\ \bibinfo {series and number} {\bibinfo {series} {Les {{Houches
  Session}}}\ No.~\bibinfo {number} {28}},\ \bibinfo {editor} {edited by\
  \bibinfo {editor} {\bibfnamefont {R.}~\bibnamefont {Balian}}\ and\ \bibinfo
  {editor} {\bibfnamefont {J.}~\bibnamefont {{Zinn-Justin}}}}\ (\bibinfo
  {publisher} {North Holland / World Scientific},\ \bibinfo {year} {1981})\
  pp.\ \bibinfo {pages} {141--250}\BibitemShut {NoStop}%
\bibitem [{Note1()}]{Note1}%
  \BibitemOpen
  \bibinfo {note} {\protect \url {https://paulbalduf.com/research}}\BibitemShut
  {NoStop}%
\end{thebibliography}%

\end{document}